\documentclass[a4paper,onecolumn,11pt,unpublished]{quantumarticle}
\pdfoutput=1

\usepackage[utf8]{inputenc}
\usepackage[T1]{fontenc}
\usepackage[english]{babel}
\usepackage[numbers,sort&compress]{natbib}
\usepackage{hyperref}
\usepackage{qcircuit}
\usepackage{graphicx}
\usepackage{amsmath,amssymb,amsfonts}
\usepackage{amsthm}
\usepackage{algorithmicx}
\usepackage{algpseudocode}
\usepackage{tikz}
\usepackage{pgfplots}
\usepackage{booktabs}
\usepackage{tabularx}
\usepackage{float}
\pgfplotsset{compat=1.18}
\definecolor{methodclassic}{HTML}{6B6B6B}
\definecolor{methodknown}{HTML}{0072B2}
\definecolor{methodcount}{HTML}{009E73}
\definecolor{methodgrover}{HTML}{E69F00}
\definecolor{methodylc}{HTML}{CC79A7}
\usetikzlibrary{calc,positioning,decorations.pathreplacing,arrows.meta,shapes.geometric}
\AtBeginDocument{%
  \hypersetup{hypertexnames=false,urlcolor=quantumviolet}%
}%
\newcounter{algorithm}
\renewcommand{\thealgorithm}{\arabic{algorithm}}
\theoremstyle{plain}
\newtheorem{theorem}{Theorem}

\theoremstyle{definition}

\theoremstyle{remark}

\newcommand{\ket}[1]{\left|#1\right>}

\newcommand{\ketbra}[1]{\left|#1\right>\left<#1\right|}
\newcommand{\braket}[2]{\left<#1\mid#2\right>}

\newcommand{\norm}[1]{\lVert #1 \rVert_1}

\begin{document}

\title{Adaptive complete quantum search with unknown number of solutions}
\author{Raj Alexandru Guțoiu}
\author{Bogdan-Călin Ciobanu}
\author{Pantelimon George Popescu\texorpdfstring{\textsuperscript{*}}{}}
\email{pgpopescu@upb.ro}

\affiliation{Computer Science and Engineering Department,
  National University of Science and Technology POLITEHNICA Bucharest,
  Splaiul Independe\c{t}ei 313, (6), Bucharest, Romania }

\keywords{quantum search, complete search, Grover's algorithm, fixed-point
  amplitude amplification, Bayesian inference, adaptive algorithms}

\begin{abstract}
Unstructured database search is a fundamental problem that requires finding one or more elements satisfying a specific property within a set of candidates. While Grover's algorithm provides a quadratic speedup for finding a marked element when the number of solutions is known, finding all solutions when their number is unknown presents additional challenges. In this paper, we thoroughly analyze this problem by considering all the different regimes of combinations of classical, Grover and generalized fixed-point quantum search instances. We unify these into an adaptive Bayesian approach that maximizes the expected success probability per unit execution time at each searching instance. For a problem with a known number of solutions, much lower than the database size, we prove that our proposed approach is asymptotically optimal within the considered search model. Furthermore, we prove that for any probability distribution characterizing the likely number of solutions, in regimes where the classical verification time is at most $25\%$ of the quantum amplitude amplification time, hybrid classical and Grover approaches are preferred to generalized fixed-point. Outside of this regime, we derive an upper bound on the maximal improvement achievable by fixed-point search compared to the best Grover-based score. Additionally, we determine the optimal transition point between classical sampling and quantum searching using an integral approximation of the probability distribution of the number of solutions. The analytical results are also supported by numerical simulations of the proposed algorithm and by comparisons between the generalized fixed-point and Grover-based search policies.
\end{abstract}

\section{Introduction}

With the development of quantum computers, new algorithms obtaining improvements in time complexity can be implemented. In the case of search problems on unstructured databases, Grover's algorithm~\cite{grover1996fast} performs better than classical exhaustive search, reducing the complexity of finding one solution from the classical $\mathcal{O}(N)$ to the quantum $\mathcal{O}(\sqrt{N})$ when there is only one solution, and from $\mathcal{O}(N/M)$ to $\mathcal{O}(\sqrt{N/M})$ when there are $M$ solutions~\cite{boyer1998tight}. However, the original version requires knowledge of $M$ in order to select the appropriate number of Grover iterations and, because only an integer number of rotations can be performed, its success probability does not generally reach exactly one. Quantum amplitude amplification generalizes this procedure~\cite{brassard2002amplitude}, while suitable choices of arbitrary phases can produce an exact search when the initial marked fraction is known~\cite{hoyer2000arbitrary}. When $M$ is unknown, Boyer et al.~\cite{boyer1998tight} introduced a randomized trial-and-error procedure in which the number of Grover iterations is varied between successive searching attempts.

An important class of modifications of Grover's algorithm replaces the phase inversions of the oracle and the diffusion operator with more general phase rotations. The first phase-matching conditions showed that these rotations cannot generally be selected independently if amplitude amplification is to be preserved~\cite{long1999phase}. Arbitrary-phase amplitude amplification was subsequently formulated by H{\o}yer, who derived an exact matching condition and showed how a marked state can be obtained with certainty when the initial success probability is known~\cite{hoyer2000arbitrary}. Long proposed an analytical equal-phase construction with zero theoretical failure probability~\cite{long2001zero}, while more general matching conditions were later obtained for arbitrary initial states and transformations~\cite{long2002general}. Phase-matching search was further extended through different phase choices and sequences of multiple matched phases~\cite{li2007phase,toyama2008multiphase}.

These deterministic methods differ in the modifications they require from the original Grover circuit. Some of them assume that the oracle can apply an adjustable phase to the solutions, while restricted-oracle methods preserve the original phase-flip oracle and modify only the other reflections~\cite{roy2022deterministic,li2023adjustable}. 

When the number of solutions is unknown, another approach is to combine multiple quantum searching instances with classical processing between them. Boyer et al. introduced a trial-and-error strategy in which the number of Grover iterations is selected randomly from a progressively increasing interval~\cite{boyer1998tight}. Cade et al. later studied the practical, non-asymptotic performance of this strategy and introduced additional classical sampling and stopping mechanisms~\cite{cade2023quantifying}. Fixed-point search has also been combined with trial-and-error methods in order to obtain a high success probability without knowing the exact value of $M$~\cite{li2019hybrid}. More generally, partial Groverization can distribute a computation between classical and quantum procedures according to the available quantum resources, providing explicit trade-offs between runtime and the required quantum coherence time~\cite{eshaghian2025runtime}. Most of these methods terminate after finding one solution. The complete searching problem instead requires all the marked elements to be recovered and a criterion for deciding when no undiscovered solutions remain. Van Apeldoorn et al. presented an algorithm that finds all $M$ marked elements using the optimal $\mathcal{O}(\sqrt{NM})$ number of quantum queries, with a polylogarithmic gate overhead and a small quantum memory~\cite{vanapeldoorn2024multiple}. Lee and Nam considered the case in which $M$ must first be estimated and proposed a two-stage procedure that combines the estimation and discovery of the solutions~\cite{lee2024allsolutions}.

Bayesian inference provides a different mechanism for using the outcomes of previous searching attempts. One example, Particle-Guided Grover's Search~\cite{eshaghian2025hybrid}, represents prior information about the unknown number of solutions, updates it after unsuccessful Grover searches, and uses the resulting posterior to select the number of iterations in the next searching attempt. Particle filtering is used to approximate the distribution without storing all its possible values. The method is benchmarked against the modified trial-and-error search of Cade et al.~\cite{cade2023quantifying}, which is based on the unknown-$M$ search of Boyer et al.~\cite{boyer1998tight}. However, Particle-Guided Grover's Search is designed to find one marked element and updates the posterior using failed quantum searches. Bayesian methods have also been used to select circuit depths and process measurement outcomes in quantum amplitude estimation~\cite{koh2022foundations,ramoa2025bae,li2026biqae}, where the objective is to estimate an amplitude rather than to recover the identities of all marked elements. However, all these methods requiring both classical and quantum algorithms doesn't take into account the cost of classical measurements, nor the cost of testing the outcome of the quantum alogrithms.

In this paper, we consider the problem of finding all $M$ solutions of an unstructured search problem with $N$ possible candidates using a hybrid classical-quantum strategy when the value of $M$ is unknown. We explicitly account for the relative cost of candidate verification through the verification ratio $r$, defined as the execution time of one candidate verification divided by that of one quantum amplitude-amplification iteration. A quantum searching instance with $m$ amplification iterations therefore has scaled cost $m+r$, while a classical searching instance has cost $r$. We introduce an adaptive Bayesian complete-search policy that maintains a posterior distribution over the number of remaining solutions and updates it after every successful or unsuccessful searching instance. At each step, the algorithm selects between classical sampling, Grover search, and generalized fixed-point search by maximizing the posterior-predictive probability of finding a solution per unit cost. Tested candidates are removed from the active search space, and the search terminates when the posterior expected number of undiscovered solutions falls below a prescribed tolerance. We assume that the execution time of an amplitude-amplification iteration is dominated by the oracle evaluation, so the additional elementary-gate overhead required to prepare the evolving active-space superposition is neglected in the considered cost model.

For known $M=o(N)$, we prove that the greedy policy is asymptotically equivalent to the Bellman-optimal policy within this search model, with both selecting Grover searching instances whose success probability approaches $\sin^2\xi$, where $\xi\in(0,\pi/2)$ satisfies $\tan\xi=2\xi$. When $M\gg 1$ while $M/N\rightarrow0$, their expected scaled runtime approaches $1.38005\sqrt{NM}$. For unknown $M$, we prove that, for every posterior distribution and $0<r\leq1/4$, generalized fixed-point searching cannot improve the best greedy score obtainable from classical sampling and unmodified Grover search. For $r>1/4$, we derive a depth-dependent upper bound on the possible score improvement and a corresponding necessary upper bound on the additional implementation cost of a generalized amplitude-amplification iteration for such an improvement to remain possible. In the large-database approximation, we further characterize the transition from classical to quantum searching: for $1/4<r\leq1$, the first quantum score crossing is attained by an optimized depth-one searching instance, while for $r\rightarrow0^+$ and a fixed number $s$ of observed solutions the number of preceding classical queries satisfies $k_{\mathrm{tr}}(r,s)=C_s^2/r^2-2B_s/r+O_s(1)$. These analytical results are supported by decision-tree calculations and Monte Carlo simulations of the proposed complete-search policy.

\section{Background and related work}\label{sec:prelim}

Consider a search space of $N = 2^n$ elements and a Boolean function $f\colon \{0,\ldots,N-1\} \to \{0,1\}$. We assume that the function can be evaluated classically or implemented as a quantum oracle, allowing any element in the search space to be tested. The solution set $\mathcal{S} = \{x : f(x)=1\}$ has unknown cardinality $M = |\mathcal{S}|$, and we define the solution fraction as $\lambda = M/N$. The goal of complete search is to find every element of $\mathcal{S}$ while minimizing the expected total runtime. Since $M$ is unknown, the algorithm terminates according to a prescribed tolerance $\varepsilon$ on the expected number of solutions that remain undiscovered, as defined precisely in Sec.~\ref{sec:algorithm}.

Quantum algorithms for unstructured search rely on amplitude amplification to increase the probability of measuring a solution. In the oracle model considered here, such an algorithm returns a measured candidate $x$, but the measurement does not by itself reveal the value of $f(x)$, so an additional evaluation of $f(x)$, using either its classical implementation or the quantum oracle, is required. In the classical search considered here, a candidate $x$ is selected at random and $f(x)$ is evaluated; depending on whether the objective is to find one or multiple solutions, additional candidates may then be selected. Since an amplitude-amplification iteration requires an application of the oracle for $f$ and an additional operator, commonly called the diffuser, we expect one complete iteration to be more costly than a direct evaluation of $f(x)$. To quantify this, we define the testing ratio $0<r<1$ as the ratio between the time of one verification and the time of one amplitude-amplification iteration. The case $r=1$ is considered as the limiting case in which these costs become equal. When both classical and quantum implementations of $f$ are available for candidate verification, the faster implementation is used to determine the testing ratio.


\subsection{Grover's algorithm and generalizations}\label{sec:grover}

Grover's algorithm makes use of two operators, an oracle $\mathbf{O}$ that changes the phase of the marked states according to $f(x)$ and a diffuser $\mathbf{D}$. The combined action of these operators amplifies the amplitudes of the marked states and therefore increases the probability of measuring a solution~\cite{grover1996fast, boyer1998tight}. If any element of the database is equally likely to be a solution, the initial state is the uniform superposition $\ket{s} = \mathbf{H}^{\otimes n}\ket{0}^{\otimes n}$. The action of the two operators, $\mathbf{O}$ and $\mathbf{D}$, can be modeled in the space spanned by the states
$\ket{t} = \frac{1}{\sqrt{M}}\sum_{x\in\mathcal{S}}\ket{x}$ and $\ket{\bar t} = \frac{1}{\sqrt{N-M}}\sum_{x\notin\mathcal{S}}\ket{x}$, which are uniform superpositions of the marked and unmarked states, respectively. In this subspace, the effect of the two operators can be expressed as $\mathbf{O}=\mathbf{I}-2\ketbra{t}$ and $\mathbf{D}=2\ketbra{s}-\mathbf{I}$, respectively. One successive application of the two operators, $\mathbf{G}=\mathbf{D}\mathbf{O}$, is called a Grover iteration. By defining $\theta=\arcsin \sqrt{\lambda}$, the initial state can be expressed as $\ket{s} = \sin\theta\ket{t} + \cos\theta\ket{\bar t}$, and the state after $m$ Grover iterations is $\mathbf{G}^m\ket{s}=\sin((2m+1)\theta)\ket{t} + \cos((2m+1)\theta)\ket{\bar t}$. The probability of measuring a solution after $m$ iterations is therefore
\begin{equation}\label{eq:grover_prob}
  P_{\mathrm{Grover}}(\lambda,m) = \sin^2((2m+1)\theta).
\end{equation}

Due to this form of the success probability and its dependence on the solution fraction $\lambda$, there are two limitations to the original version of Grover's algorithm. First, knowledge of $\lambda$ is required to choose the number of iterations that maximizes the success probability. Second, even the optimal integer number of iterations does not generally guarantee a success probability of one. These limitations motivated strategies based on repeated searching instances, modifications of the initial state and its associated diffuser, modifications of the oracle and diffuser phases, or combinations of these approaches.

The generalisation of the initial state $\ket{s}$ is useful in the case of prior information about the distribution of the solutions~\cite{montanaro2011advice}. A custom state $\ket{s_k}$ can be prepared by a unitary $\mathbf{U}_k$ starting from the all-zero register state $\ket{0}\equiv\ket{0}^{\otimes n}$. We denote by $\mathbf{D}_0$ the diffuser associated with $\ket{0}$, $\mathbf{D}_0=2\ketbra{0}-\mathbf{I}$, and the generalized diffuser is then defined as $\mathbf{D}_k = \mathbf{U}_k\mathbf{D}_0\mathbf{U}_k^\dagger$. The original Grover algorithm is recovered when $\mathbf{U} = \mathbf{H}^{\otimes n}$. The change of the initial state and rotation basis for the diffuser changes the value of $\theta$ in eq.~\eqref{eq:grover_prob} to $\theta_k$, defined by $\sin^2\theta_k=\sum_{x\in\mathcal{S}}|\braket{x}{s_k}|^2$. When $\ket{s_k}$ is a uniform superposition over the active search space and $\ket{t}$ is defined using its marked elements, this reduces to $\sin\theta_k=\braket{t}{s_k}$. This generalization also allows previously tested elements to be removed from the active search space by assigning zero amplitude to their computational-basis states. Uniform superpositions over custom subsets can be prepared exactly using decision-diagram-based state-preparation methods, with the required resources depending on the structure of the subset~\cite{mozafari2021efficient,mozafari2022deterministic}.


Another generalization replaces the phase inversions implemented by the oracle and diffuser with arbitrary phase rotations. The phase-matching analysis of these rotations was introduced by Long et al.~\cite{long1999phase} and subsequently extended by Høyer~\cite{hoyer2000arbitrary}. In the notation used here, the generalized operators are:
\begin{equation}
  \mathbf{O}(\alpha)=\mathbf{I}-(1-e^{i\alpha})\ketbra{t},\qquad
  \mathbf{D}(\beta)=\mathbf{I}-(1-e^{i\beta})\ketbra{s}.
  \label{eq:generalized_operators}
\end{equation} 
For $\alpha=\beta=\pi$, $\mathbf{O}(\alpha)=\mathbf{O}$ and $\mathbf{D}(\beta)=-\mathbf{D}$, so the generalized iteration differs from the original Grover iteration only by a global phase. This generalizations allows for designing quantum algorithms that achieve a success probability of $100\%$ or manipulate the distribution of the success probability over a range of $\lambda$ in order to avoid overshooting the marked states. These algorithms are discussed in more detail in the following subsections.

\subsection{Exact quantum search}\label{sec:deterministic_search}

When $\lambda$ is known, exact quantum search can be achieved with at most one extra iteration compared to the one that maximizes the original Grover success probability.

Brassard et al.~\cite{brassard2002amplitude} generalized Grover search to amplitude amplification and showed that, when the initial success probability $a$ is known, a solution can be obtained with certainty while retaining a complexity of $\mathcal{O}(1/\sqrt{a})$. They presented two constructions: modifying the initial procedure to slightly reduce $a$, so that an integer number of standard iterations reaches the solution subspace exactly, or applying standard iterations followed by a final generalized iteration with appropriately chosen phases. For the uniform initial state considered here, $a=\lambda$.

By using the information related to a known element $x_0\notin \mathcal{S}$, the initial state can be modified by manipulating the amplitude of $\ket{x_0}$ such that the success probability of measuring a solution can achieve $100\%$~\cite{gutoiu2024simple}. The advantage of this method is not requiring to modify the oracle or diffuser phases, but it requires the information of a known non-solution element and the construction of an operator that prepares the modified initial state. 

By using the generalized oracle and diffuser described by eq.~\ref{eq:generalized_operators}, exact quantum search can be achieved without requiring a known nonsolution, provided that $\lambda$ is known. For a known $\lambda$, exact quantum search can be achieved using multiple strategies. If the same $\alpha$ and $\beta$ are used for every iteration, Høyer~\cite{hoyer2000arbitrary} derived the phase condition for exact quantum search, that is $\tan(\alpha/2)=(1-2\lambda)\tan(\beta/2)$, and showed how it can be combined with appropriate state and conditional-phase adjustments to obtain exact amplitude amplification. For the uniform initial state, a choice of phases for obtaining exact quantum search is given by:  
\begin{equation}
  \alpha=\beta=2\arcsin\left(\frac{\sin\left(\pi/(4m+2)\right)}{\sqrt{\lambda}}\right),
\end{equation}
which gives unit probability for any $m\geq m_{\mathrm{opt}}(\lambda)$~\cite{long2001zero}, where the optimal number of iterations is given by:
\begin{equation}\label{eq:m_opt}
  m_{\mathrm{opt}}(\lambda)=\left\lceil\frac{\pi}{4\arcsin\sqrt{\lambda}}-\frac{1}{2}\right\rceil.
\end{equation}
It was proven that this iteration count is the minimum number of iterations required to achieve a success probability of $100\%$~\cite{gutoiu2025optimal}, and, further more, if $m<m_{\mathrm{opt}}(\lambda)$, no generalization can obtain a success probability higher than the one provided by the original Grover's algorihm.

In the general case, the success probability as function of the number of iterations $m$ and $\lambda$ is given by~\cite{chowdhury2023phase}:
\begin{equation}\label{eq:fixed_phase_probability}
  \begin{aligned}
    P(\lambda,\alpha,\beta,m)
    ={}& \lambda
    +(1-2\lambda)(1-n_3^2)\sin^2(m\omega) \\
    &-2\sqrt{\lambda(1-\lambda)}
    \left[
      n_1n_3\sin^2(m\omega)
      +n_2\sin(m\omega)\cos(m\omega)
    \right].
  \end{aligned}
\end{equation}
where
\begin{equation*}
  \begin{aligned}
    \cos\omega
    &= \cos\left(\frac{\alpha-\beta}{2}\right)
       -2\lambda\sin\left(\frac{\alpha}{2}\right)
       \sin\left(\frac{\beta}{2}\right), \\[2pt]
    n_1
    &= \frac{2\sqrt{\lambda(1-\lambda)}
       \cos(\alpha/2)\sin(\beta/2)}{\sin\omega}, \\[2pt]
    n_2
    &= -\frac{2\sqrt{\lambda(1-\lambda)}
       \sin(\alpha/2)\sin(\beta/2)}{\sin\omega}, \\[2pt]
    n_3
    &= \frac{-\sin\left((\alpha-\beta)/2\right)
       -2\lambda\sin(\beta/2)\cos(\alpha/2)}{\sin\omega}.
  \end{aligned}
\end{equation*}

The optimal iteration count for deterministic search can also be attained when the oracle is restricted to the standard phase flip. For $\alpha=\pi$, the D2p construction alternates between two diffuser phases and, for $\lambda\leq1/4$, reaches unit success probability using $m_{\mathrm{opt}}(\lambda)$ oracle calls~\cite{roy2022deterministic,roy2023erratum}. This is at most one call more than the nearest-integer depth that maximizes the success probability of the original Grover algorithm. A more general framework allows either the oracle phase $\alpha$ or the diffuser phase $\beta$ to be specified arbitrarily, with the remaining parameters chosen to retain exact search~\cite{li2023adjustable}.

As a consequence, the maximum success probability of a searching algorithm, when the value of $\lambda$ is known, can be written as:
\begin{equation}
  P_{max}(\lambda, m)=
  \begin{cases}
    \sin^2((2m+1)\arcsin\sqrt{\lambda}),
      & m<m_{\mathrm{opt}}(\lambda),\\
    1, & m\geq m_{\mathrm{opt}}(\lambda).
  \end{cases}
  \label{eq:max_P}
\end{equation}

\subsection{Fixed-point search}\label{sec:fixed_point_search}

The arbitrary phase generalization of the oracle and diffuser operators, given by eq.~\eqref{eq:generalized_operators}, can also be used for searching algorithms that do not require the exact value of $\lambda$, being able to increase the success probability over a range of possible values of $\lambda$. However, such algorithms do not generally guarantee a success probability of $100\%$. 

For one iteration, the phase choice $\alpha=\beta=\pi/2$ gives $P(\lambda, \pi/2, \pi/2, 1)=\lambda(5-8\lambda+4\lambda^2)$, which is at least $25/27$ for any value of $\lambda\in[1/3, 1]$~\cite{li2007phase}. By using different phases at each iteration, a success probability of at least $0.988$ can be obtained for any value of $\lambda\in[0.1, 1]$ using six iterations with the phase relation $\alpha_j=\beta_{7-j}$~\cite{toyama2008multiphase}. 


Grover introduced the first fixed-point quantum search algorithm using the phase choice $\alpha=\beta=\pi/3$~\cite{grover2005fixed}. Instead of repeating the same generalized iteration, the transformation is applied recursively, with each level using the transformation obtained at the previous level and its inverse. Each recursion level cubes the failure probability, resulting after $k$ levels in $P_k(\lambda)=1-(1-\lambda)^{3^k}$ using $(3^k-1)/2$ oracle calls. Consequently, achieving a failure probability of at most $\delta$ requires $\mathcal{O}(\lambda^{-1}\log(1/\delta))$ oracle calls, and the quadratic speedup of Grover's algorithm is not retained.

A phase sequence that we will further refer to as the YLC phases was presented by Yoder et al.~\cite{yoder2014fixed}. For $L=2m+1$, let $w\in[0,1)$ be a parameter that determines the fixed-point interval. For the phase convention used in eq.~\eqref{eq:generalized_operators}, the phases are
\begin{equation}\label{eq:ylc_phases}
  \begin{aligned}
    \alpha_j
    &=2\operatorname{arccot}\left(
      w\tan\frac{(2j-1)\pi}{2m+1}
    \right),\\
    \beta_j
    &=-2\operatorname{arccot}\left(
      w\tan\frac{2j\pi}{2m+1}
    \right),
  \end{aligned}
\end{equation}
for $j=1,\ldots,m$. Here $\operatorname{arccot}$ takes values in $(0,\pi)$. The resulting success probability is
\begin{equation}\label{eq:ylc_chebyshev}
  P(\lambda,m,w)=1-\frac{T_L^2\left(\gamma\sqrt{1-\lambda}\right)}{T_L^2(\gamma)},
\end{equation}
where $\gamma=1/\sqrt{1-w^2}$ and $T_L$ is the Chebyshev polynomial of the first kind. At $w=0$, eq.~\eqref{eq:ylc_chebyshev} reduces to eq.~\eqref{eq:grover_prob}. For $\lambda\geq w^2$, the success probability is at least $1-1/T_L^2(\gamma)$.

For a target failure probability $\delta$ and a lower bound $\lambda_0\leq\lambda$, choose $w=\sqrt{\lambda_0}$ and let $L_\star$ be the smallest odd integer satisfying
\[
  L_\star\geq\frac{\operatorname{arcosh}(\delta^{-1/2})}{\operatorname{artanh}(\sqrt{\lambda_0})}.
\]
The corresponding query complexity is $\mathcal{O}(\log(1/\delta)/\sqrt{\lambda_0})$. If only the existence of at least one solution is assumed, $\lambda_0=1/N$ can be used.


\subsection{Quantum counting and amplitude estimation}\label{sec:counting}

Quantum counting was introduced by Brassard et al.~\cite{brassard1998counting} by combining Grover iterations with quantum phase estimation~\cite{kitaev1995quantum}. In the search subspace, the Grover operator $\mathbf{G}$ has eigenvalues $e^{\pm2i\theta}$, where $\sin^2\theta=\lambda=M/N$, and an inverse quantum Fourier transform~\cite{shor1997polynomial} provides an estimate of the corresponding eigenphase. One execution requires one final measurement, while the size of the control register determines its coherent oracle depth. For $M>0$ and a relative error $\eta$, Brassard et al.~\cite{brassard2002amplitude} showed that an estimate satisfying $|\widetilde{M}-M|\leq\eta M$ can be obtained with probability at least $2/3$ using an expected number of $\mathcal{O}(\eta^{-1}\sqrt{N/M})$ oracle evaluations. They also presented an exact-counting construction that determines the integer value of $M$ with probability at least $2/3$ using an expected number of $\mathcal{O}(\sqrt{(M+1)(N-M+1)})$ evaluations. This construction invokes the counting circuit three times and therefore requires three final measurements. Amplitude estimation extends quantum counting to an arbitrary state-preparation procedure, with the success probability given by $\lambda$ in the uniform search considered here.

Approximate counting and amplitude estimation can also be performed without phase estimation or controlled powers of $\mathbf{G}$. Aaronson and Rall~\cite{aaronson2020counting} adaptively select Grover depths and obtain a relative-error estimate of $M$ with probability at least $1-\delta$ using $\mathcal{O}(\eta^{-1}\sqrt{N/M}\log(1/\delta))$ oracle queries. Suzuki et al.~\cite{suzuki2020amplitude} introduced maximum-likelihood amplitude estimation, which combines repeated measurements performed at multiple amplification depths. Grinko et al.~\cite{grinko2021iterative} instead maintain a confidence interval for the amplitude and adaptively increase the circuit depth while avoiding phase ambiguity. These approaches use multiple measurements, and their query costs weight every circuit execution by its amplification depth rather than counting all measurements equally.

Counting and solution discovery do not always have to be performed separately. Lee and Nam~\cite{lee2024allsolutions} estimate $M$ using repeated Grover measurements and retain the distinct marked indices measured during this estimation stage. Their procedure therefore returns solutions while estimating their number, whereas phase-estimation-based quantum counting obtains its estimate from the measurement of the phase register.

\subsection{Search with an unknown number of solutions}\label{sec:unknown_search}

Boyer et al.~\cite{boyer1998tight} showed that, if a success probability of $100\%$ is not required in each attempt, the same number of Grover iterations can be repeated until the measured candidate is verified as a solution. Each attempt ends with one measurement of the search register and one verification of the measured candidate. For a fixed depth $m\geq1$, the expected number of Grover iterations is $m/P_{\mathrm{Grover}}(\lambda,m)$. In the sparse limit, the depth minimizing this quantity is determined by $\tan\xi=2\xi$,
with the numerical solution $\xi\approx1.16556$. 
The depth and success probability of one attempt then approach $m_\xi\sim0.58278/\sqrt{\lambda}$ and $P_\xi=\sin^2\xi\approx0.84458$, respectively, and the expected number of Grover iterations required to find a solution approaches $0.69003/\sqrt{\lambda}$.

When $\lambda$ is unknown, Boyer et al.~\cite{boyer1998tight} proposed a randomized trial-and-error procedure that selects $m$ uniformly from a progressively increasing range. Every unsuccessful searching instance increases the range and is followed by a new measurement and verification. The expected number of Grover iterations is $\mathcal{O}(\sqrt{N/M})=\mathcal{O}(1/\sqrt{\lambda})$ when $M>0$. For $M=0$, a timeout can be added to report the absence of solutions with one-sided error.

Brassard et al.~\cite{brassard2002amplitude} extended this strategy to general amplitude amplification through the QSearch algorithm. In the uniform search considered here, each trial first performs a direct attempt with success probability $\lambda$ and, if it fails, applies a randomly selected number of amplitude-amplification iterations from an increasing range. An unsuccessful trial therefore requires two measurements, while the successful trial requires one or two. For $\lambda>0$, QSearch has an expected complexity of $\mathcal{O}(1/\sqrt{\lambda})$; without an additional stopping rule, it does not terminate when $\lambda=0$.

Zalka~\cite{zalka1999unknown} presented a bounded procedure with one-sided error that tests successive hypotheses for $M$ and has query complexity $\mathcal{O}(\sqrt{N\log(1/\delta)})$. Cade et al.~\cite{cade2023quantifying} subsequently gave finite implementations of the randomized and deterministic procedures, including direct random tests and truncated quantum runs. Each Grover cycle ends with one measurement, while a truncated run can contain multiple cycles.

Okamoto and Watanabe~\cite{okamoto2001deterministic} presented a deterministic trial-and-error procedure in which the number of Grover iterations increases by a constant factor between attempts instead of being selected randomly. Li et al.~\cite{li2019hybrid} subsequently combined a deterministic growing-depth procedure with the YLC fixed-point sequence. On trial $k$, their algorithm first measures and tests the uniform superposition and, if this direct test fails, applies a YLC sequence with $m_k=\lceil c^{k-1}\rceil$ iterations followed by measurement and verification. Every unsuccessful trial therefore requires two measurements, while the successful trial requires one or two. The error-amplitude choice $0.5659$ and $c=1.523$ minimize the upper bound derived in their analysis. The expected query complexity is $\mathcal{O}(1/\sqrt{\lambda})$, with the bound reported as $5.643/\sqrt{\lambda}$ under their convention in which one arbitrary-phase Grover iteration uses two queries to the standard phase-$\pi$ oracle.

\subsection{Prior-guided and adaptive quantum search}\label{sec:bayes}

Prior information about the location of a marked element can be incorporated into the initial state or into the order in which subsets are searched. Montanaro~\cite{montanaro2011advice} searches geometrically increasing blocks ordered by their prior probabilities, while Huang and Pang~\cite{huang2023prior} and He et al.~\cite{he2024prior} optimize the initial amplitudes and the associated search transformation for a given prior distribution. Each search attempt ends with one computational-basis measurement, although an unsuccessful attempt may require the procedure to be repeated.

Particle-Guided Grover Search uses a prior distribution $p(\lambda)$ over the unknown solution fraction and updates it after every failed Grover searching instance~\cite{eshaghian2025hybrid}. The distribution is represented by $N_{\mathrm p}$ particles $\lambda_i$ with weights $p_i^{(k)}$ and updated using the failure likelihood $1-P_{\mathrm{Grover}}(\lambda_i,m)$ and Sequential Importance Resampling. The depth of the next searching instance is selected as
\[
  m^*
  =
  \operatorname*{arg\,max}_{1\leq m\leq\lceil\sqrt{N}\rceil}
  \frac{1}{m}
  \sum_{i=1}^{N_{\mathrm p}}
  p_i^{(k)}
  P_{\mathrm{Grover}}(\lambda_i,m).
\]
The selected depth maximizes the particle approximation of the expected success probability per Grover iteration. Each searching instance produces one measured candidate. After a failure, the particle weights are updated and a new depth is selected. For matching priors, the reported simulations obtain an average-query reduction of approximately $50\%$ compared with QSearch, while a mismatched prior can reduce this advantage.

Bayesian methods have also been used to select the parameters of amplitude-estimation circuits. Wang et al.~\cite{wang2021minimizing} select tunable phases according to the expected posterior variance, while Ramoa et al.~\cite{ramoa2025bae} represent the posterior using particles and adaptively select the amplification depth. These methods collect binary outcomes at the selected depths and determine the total number of measurements through their adaptive schedules, but their output is an amplitude estimate rather than the indices of marked states.

\subsection{Finding multiple or all marked elements}\label{sec:multiple_search}

If a one-solution search is repeated without modifying the oracle, solutions that have already been discovered may be measured again. Kessler et al.~\cite{kessler2023shots} analysed the resulting coupon-collector process for a fixed Grover depth. Every searching instance produces one measured database value, and the expected number of searching instances required to collect all $M$ solutions is
\begin{equation}\label{eq:coupon_expectation}
  \mathbb{E}[X_M]
  =\frac{M H_M}{P_{\mathrm{Grover}}(\lambda,m)},
\end{equation}
where $H_M$ is the $M$th harmonic number. They also derived the distribution and probability quantiles of the required number of searching instances.

Repeated measurements of previously discovered solutions can be avoided by modifying the oracle after every success. De Graaf and de Wolf~\cite{degraaf2002quantum} presented a certainty procedure given a known upper bound $M_{\max}$ on the number of marked elements. It performs $M_{\max}$ exact-search stages, each ending with one final measurement, and removes every returned solution from the active marked set. Its query complexity is $\mathcal{O}(\sqrt{NM_{\max}})$, which becomes the query-optimal $\mathcal{O}(\sqrt{NM})$ complexity when $M_{\max}=M$~\cite{vanapeldoorn2024multiple}.

When the indices already found are stored in a classical list $J$, a direct implementation of the modified oracle requires $\mathcal{O}(|J|\log N)$ additional gates for every membership test, giving a total gate complexity of $\widetilde{\mathcal{O}}(\sqrt{NM^3})$ despite retaining the optimal query complexity. Van Apeldoorn et al.~\cite{vanapeldoorn2024multiple} avoid this overhead by estimating $M$, sampling marked indices, partitioning the search space, and applying the certainty procedure within the resulting intervals. Their algorithm finds all marked indices with probability at least $1-\delta$ using $\mathcal{O}(\sqrt{NM})$ oracle queries and a polylogarithmic gate overhead. Its measurements occur in both the counting and sampling subroutines, for which the paper states query and gate bounds rather than a separate closed measurement bound.

Lee and Nam~\cite{lee2024allsolutions} proposed a two-stage procedure that estimates $M$ while collecting marked indices and then continues the discovery using this estimate, without modifying the oracle. The method assumes $M\leq\sqrt{N}$. In the first stage, one Grover iteration is applied before every measurement, and their experiments use $10\sqrt{N}$ measurements. The distinct marked elements observed during these measurements are stored in $\mathcal{S}_{found}$ and used to reduce the work of the second stage. The reported experiments find more than $80\%$ of the solutions during estimation. In the second stage, a depth selected from $\widetilde{M}$ is repeatedly applied, with one measurement after every searching instance, until a threshold number of consecutive measurements returns no new solution. The number of measurements considered in their analysis scales as $\mathcal{O}(M\log M)$, and the corresponding Grover-iteration scaling is $\mathcal{O}(\sqrt{NM}\log M)$. Their simulations recover more than $99\%$ of the solutions on average.

\section{Adaptive Bayesian complete quantum search}\label{sec:algorithm}

In this section we introduce an adaptive Bayesian strategy for complete search using both classical and quantum searching instances. The strategy maintains a posterior distribution over the number of remaining solutions and uses it to select each searching instance until a prescribed stopping tolerance is reached.

\subsection{Search state and admissible searching instances}

Let $M_0=M=|\mathcal{S}|$ be the true number of solutions of the search problem defined by $f$. After $k$ searching instances, let $\mathcal{S}_{\mathrm{tested}}$ denote the set of tested elements, with $|\mathcal{S}_{\mathrm{tested}}|=k$. The remaining search space has size $N_k=N-k$ and contains $M_k=|\mathcal{S}\setminus\mathcal{S}_{\mathrm{tested}}|$ solutions, corresponding to the solution fraction $\lambda_k=M_k/N_k$.

Each searching instance returns one candidate, which is then evaluated using $f$. Once a candidate has been tested, testing it again provides no additional information. We therefore remove every tested candidate from subsequent searching instances. The next searching instance uses the uniform superposition $\ket{s_k}=N_k^{-1/2}\sum_{x\notin\mathcal{S}_{\mathrm{tested}}}\ket{x}$ over the remaining elements.

The state $\ket{s_k}$ can be prepared by a unitary $\mathbf{U}_k$ satisfying $\mathbf{U}_k\ket{0}=\ket{s_k}$, and the generalized diffuser in eq.~\eqref{eq:generalized_operators} is then defined with respect to $\ket{s_k}$. The elementary-gate cost of implementing $\mathbf{U}_k$ and $\mathbf{U}_k^\dagger$ is not included in the cost assigned to a searching instance.

An admissible searching instance is specified by $m\in\mathbb{Z}_{\geq0}$ and $w\in[0,1)$. For $m\geq1$, the algorithm applies $m$ generalized amplitude-amplification iterations using the YLC phases given by eq.~\eqref{eq:ylc_phases}, measures a candidate, and evaluates $f$ to verify the result. Its success probability $P(\lambda_k,m,w)$ is given by eq.~\eqref{eq:ylc_chebyshev}. The choice $w=0$ recovers a Grover searching instance. The case $m=0$, for which we set $w=0$ by convention, denotes a classical searching instance in which a candidate is sampled uniformly from the remaining elements and $f$ is evaluated. In the probability model, this is equivalent to measuring $\ket{s_k}$ without applying any amplitude-amplification iterations, and its success probability is $P(\lambda_k,0,0)=\lambda_k$.

The cost of each searching instance is measured in units of one amplitude-amplification iteration. A quantum searching instance with $m$ iterations followed by verification has cost $m+r$, while a classical searching instance has cost $r$.

\subsection{Posterior distribution and Bayesian update}

Since the true value of $M_k$ is unknown, we describe the available information using the posterior distribution $\mathbf{p}^{(k)}=(p_0^{(k)},p_1^{(k)},\ldots,p_{N_k}^{(k)})$, where $p_i^{(k)}$ is the posterior probability that $M_k=i$. The distribution is defined over the support $\{0,1,\ldots,N_k\}$ and, before the outcome of the next searching instance is observed, acts as the prior distribution for the following Bayesian update.

Given the hypothesis $M_k=i$, the probability that a searching instance with parameters $(m,w)$ returns a solution is $P(i/N_k,m,w)$. When a successful outcome is observed, this conditional probability acts as the likelihood of the hypothesis $M_k=i$. For a failure, the corresponding likelihood is $1-P(i/N_k,m,w)$. We therefore define the scoring vector $\mathbf{s}(m,w,\mathbf{p}^{(k)})$ with elements
\begin{equation}\label{eq:score_vector}
  s_i(m,w,\mathbf{p}^{(k)})
  =p_i^{(k)}P(i/N_k,m,w).
\end{equation}
The elements of the scoring vector are the unnormalized posterior weights associated with a successful outcome. Their sum, $\norm{\mathbf{s}}=\sum_{i=0}^{N_k}s_i$, is the posterior-predictive probability that the searching instance returns a solution.

Let $\mathbf{s}^{(k)}=\mathbf{s}(m_k,w_k,\mathbf{p}^{(k)})$ denote the scoring vector for the searching instance selected at step $k$. The posterior distribution after removing the tested candidate is
\begin{equation}\label{eq:bayes_update}
  p_j^{(k+1)}=
  \begin{cases}
    \dfrac{s_{j+1}^{(k)}}{\norm{\mathbf{s}^{(k)}}},
      & \text{if a solution is found},\\[6pt]
    \dfrac{p_j^{(k)}-s_j^{(k)}}{1-\norm{\mathbf{s}^{(k)}}},
      & \text{if no solution is found}.
  \end{cases}
\end{equation}
where $j=0,\ldots,N_{k+1}$ and $N_{k+1}=N_k-1$. If a solution is found, the index is shifted because $M_{k+1}=M_k-1$. If no solution is found, the removed candidate is a nonsolution and $M_{k+1}=M_k$.
The same update rules apply to classical and quantum searching instances.

The initial distribution $\mathbf{p}^{(0)}$ represents the available prior information about the number of solutions. If every value $M_0\in\{0,1,\ldots,N\}$ is considered equally likely, then $p_i^{(0)}=1/(N+1)$. A known upper bound or other information about the number of solutions can instead be included by restricting or reweighting the support of $\mathbf{p}^{(0)}$.

\subsection{Searching instance selection}

The algorithm must minimize the expected scaled runtime. We denote by $\mathbb{E}[X\mid\mathbf{p}]$ the expected value of $X$ given the probability distribution $\mathbf{p}$, and by $\mathcal{T}_k$ the remaining scaled runtime after $k$ searching instances. For a searching instance with parameters $(m,w)$, let $\mathbf{p}_{\mathrm{S}}^{(k+1)}$ and $\mathbf{p}_{\mathrm{F}}^{(k+1)}$ be the posterior distributions obtained from eq.~\eqref{eq:bayes_update} after a success and a failure, respectively. The minimum expected remaining runtime satisfies the Bellman equation
\begin{equation}\label{eq:bellman_posterior}
  \begin{aligned}
  \mathbb{E}[\mathcal{T}_k\mid\mathbf{p}^{(k)}]
  =\min_{m,w}\Bigl\{
    &m+r
    +\norm{\mathbf{s}(m,w,\mathbf{p}^{(k)})}
      \mathbb{E}[\mathcal{T}_{k+1}\mid\mathbf{p}_{\mathrm{S}}^{(k+1)}]\\
    &+\left(1-\norm{\mathbf{s}(m,w,\mathbf{p}^{(k)})}\right)
      \mathbb{E}[\mathcal{T}_{k+1}\mid\mathbf{p}_{\mathrm{F}}^{(k+1)}]
  \Bigr\}.
  \end{aligned}
\end{equation}

The minimization is performed over the admissible searching instances, and the expected remaining runtime is zero for every posterior distribution that satisfies the stopping condition. For a general posterior distribution, each searching instance produces two possible posterior states, and different states can satisfy the stopping condition at different depths. The number of states can therefore grow exponentially, making an exact evaluation of the Bellman equation impractical for most posterior distributions. For posterior distributions with a simpler structure, the equation can still be evaluated. In particular, when the number of solutions is known, the posterior is a point mass and the state reduces to the discrete pair $(N_k,M_k)$, for which the Bellman equation can be solved using a finite table.

Another strategy for reducing the scaled runtime is to use a greedy policy that maximizes the posterior-predictive probability of finding a solution per unit cost. However, maximizing the score of each searching instance independently does not necessarily minimize the expected total runtime. Given the current posterior distribution, we define the score of a searching instance as
\begin{equation}\label{eq:score}
  S(m,w,r\mid\mathbf{p}^{(k)})
  =\frac{\norm{\mathbf{s}(m,w,\mathbf{p}^{(k)})}}{m+r},
\end{equation}
where $\mathbf{s}(m,w,\mathbf{p}^{(k)})$ is defined by eq.~\eqref{eq:score_vector}. For a classical searching instance, the score is $S(0,0,r\mid\mathbf{p}^{(k)})=\mathbb{E}[\lambda_k\mid\mathbf{p}^{(k)}]/r$, namely the posterior mean of the solution fraction in the remaining search space divided by the classical evaluation cost. Since $\norm{\mathbf{s}}\leq1$, the score satisfies $S(m,w,r\mid\mathbf{p}^{(k)})\leq1/(m+r)$. Therefore, once this upper bound is no greater than the best score already found, no larger value of $m$ can improve the result.

At step $k$, the greedy policy selects the admissible parameters that maximize the immediate score, $(m_k^*,w_k^*)=\underset{m,w}{\arg\max}S(m,w,r\mid\mathbf{p}^{(k)}).$ The implementation supports two posterior stopping rules. For the expected-remaining rule used by the complete-search benchmark, the algorithm terminates when the posterior mean number of remaining solutions does not exceed a user-defined tolerance $\varepsilon$,
\begin{equation}\label{eq:stop}
  \mathbb{E}\!\left[M_k\mid\mathbf{p}^{(k)}\right]
  =\sum_{i=0}^{N_k}i\,p_i^{(k)}
  \leq\varepsilon.
\end{equation}
The tolerance $\varepsilon$ is measured in units of solutions and quantifies the expected number of solutions remaining in the database. The alternative tail-probability rule terminates when $\Pr(M_k>0\mid\mathbf{p}^{(k)})=1-p_0^{(k)}\leq\delta$, where $\delta$ is a dimensionless probability threshold. The GPU experiments use this tail-probability rule, whereas the complete-search benchmark uses eq.~\eqref{eq:stop}. In the implementation, exhaustion of all active candidates and a configured maximum number of steps are additional terminal conditions.

\begin{figure}[H]
  \centering
  \refstepcounter{algorithm}
  \begin{minipage}[t]{0.59\textwidth}
    \vspace{0pt}
    \centering
    \textbf{(a) Algorithm~\thealgorithm: Adaptive Bayesian Complete Quantum Search}
    \label{alg:bayessearch}
    \vspace{0.3em}

    \raggedright
    \scriptsize
    \begin{algorithmic}[1]
      \Require $N$, oracle $f$, tolerance $\varepsilon$, testing ratio $r$, and $\mathbf{p}^{(0)}$
      \Ensure Set of found solutions $\mathcal{S}_{\mathrm{found}}$
      \State $\mathcal{S}_{\mathrm{found}}\gets\emptyset$;\ \ $\mathcal{S}_{\mathrm{tested}}\gets\emptyset$;\ \ $k\gets0$
      \While{$\mathbb{E}[M_k\mid\mathbf{p}^{(k)}]>\varepsilon$}
        \State $N_k\gets N-k$
        \State $(m_k^*,w_k^*)\gets\textsc{GreedyInstance}(\mathbf{p}^{(k)},N_k,r)$
        \State $x_k\gets\textsc{Search}(f,m_k^*,w_k^*,\mathcal{S}_{\mathrm{tested}})$
        \State $y_k\gets f(x_k)$
        \If{$y_k=1$}
          \State $\mathcal{S}_{\mathrm{found}}\gets\mathcal{S}_{\mathrm{found}}\cup\{x_k\}$
        \EndIf
        \State $\mathbf{p}^{(k+1)}\gets\textsc{BayesUpdate}(\mathbf{p}^{(k)},m_k^*,w_k^*,y_k,N_k)$
        \State $\mathcal{S}_{\mathrm{tested}}\gets\mathcal{S}_{\mathrm{tested}}\cup\{x_k\}$
        \State $k\gets k+1$
      \EndWhile
      \State \Return $\mathcal{S}_{\mathrm{found}}$
    \end{algorithmic}
  \end{minipage}
  \hfill
  \begin{minipage}[t]{0.38\textwidth}
    \vspace{0pt}
    \centering
    \textbf{(b) Flowchart}
    \vspace{0.3em}

    \resizebox{\linewidth}{!}{%
      \begin{tikzpicture}[
        block/.style={rectangle, draw, rounded corners, minimum width=2.8cm, minimum height=0.68cm, align=center, font=\scriptsize},
        decision/.style={diamond, draw, aspect=2.1, minimum width=2.7cm, align=center, font=\scriptsize, inner sep=1pt},
        arrow/.style={-{Stealth[length=1.8mm]}, thick},
        node distance=0.5cm and 0.8cm
        ]

        \node[block, fill=blue!10] (init) {Initialize $\mathbf{p}^{(0)}$,\\$\mathcal{S}_{\mathrm{found}}=\mathcal{S}_{\mathrm{tested}}=\emptyset$};
        \node[decision, below=of init, fill=yellow!15] (stop) {$\mathbb{E}[M_k\mid\mathbf{p}^{(k)}]\leq\varepsilon$?};
        \node[block, right=of stop, fill=blue!10] (done) {Return\\$\mathcal{S}_{\mathrm{found}}$};
        \node[block, below=of stop, fill=green!10] (select) {Select $(m_k^*,w_k^*)$\\by maximizing the score};
        \node[block, below=of select, fill=orange!10] (execute) {Execute the instance,\\measure $x_k$, and evaluate $f(x_k)$};
        \node[decision, below=of execute, fill=yellow!15] (found) {$f(x_k)=1$?};
        \node[block, right=of found, fill=purple!10] (record) {Add $x_k$ to\\$\mathcal{S}_{\mathrm{found}}$};
        \node[block, below=of found, fill=purple!10] (update) {Update $\mathbf{p}^{(k+1)}$;\\remove $x_k$ and set $k\gets k+1$};

        \draw[arrow] (init) -- (stop);
        \draw[arrow] (stop) -- node[above, font=\scriptsize] {yes} (done);
        \draw[arrow] (stop) -- node[right, font=\scriptsize] {no} (select);
        \draw[arrow] (select) -- (execute);
        \draw[arrow] (execute) -- (found);
        \draw[arrow] (found) -- node[above, font=\scriptsize] {yes} (record);
        \draw[arrow] (found) -- node[right, font=\scriptsize] {no} (update);
        \draw[arrow] (record) |- (update);
        \draw[arrow] (update.west) -- ++(-0.8,0) |- (stop.west);
      \end{tikzpicture}%
    }
  \end{minipage}
  \caption{Two representations of the adaptive Bayesian complete quantum search: (a) pseudocode and (b) the corresponding flowchart. At every nonterminal state, the algorithm selects the searching instance that maximizes the greedy score, evaluates the measured candidate, and updates the posterior distribution.}
  \label{fig:flowchart}
\end{figure}

The complete procedure is presented as pseudocode and summarized by the accompanying flowchart in Fig.~\ref{fig:flowchart}. The complete algorithm starts from the initial distribution $\mathbf{p}^{(0)}$ and repeatedly selects the searching instance that maximizes the greedy score. The selected instance returns a candidate $x_k$, which is evaluated using $f$, added to $\mathcal{S}_{\mathrm{found}}$ if it is a solution, and removed from the remaining search space. The posterior distribution is then updated according to eq.~\eqref{eq:bayes_update}. This procedure is repeated until the stopping condition in eq.~\eqref{eq:stop} is satisfied. Since one candidate is removed after every searching instance, the algorithm terminates after at most $N$ instances. The returned set contains the solutions identified before termination, while $\varepsilon$ determines the tolerated posterior mean number of solutions that remain undiscovered.

\section{Theoretical characterization of searching-instance selection}\label{sec:theory}

This section analyzes how the posterior distribution and the testing ratio determine the selection of searching instances.

\subsection{Optimal hybrid search for known $M$}

If the number of solutions $M$ is known, the initial posterior distribution $\mathbf{p}^{(0)}$ is a point mass, with $p_M^{(0)}=1$ and zero everywhere else. After $k$ searching instances have found $s_k$ solutions, the number of remaining solutions is $M_k=M-s_k$, and the posterior remains a point mass, with $p_{M_k}^{(k)}=1$. For $m<m_{{opt}}(\lambda_k)$, the maximum success probability is attained by the original Grover algorithm with $w=0$. For $m\geq m_{{opt}}(\lambda_k)$, multiple implementations that achieve a success probability of $100\%$ exist, including a suitable choice of $w$ in the fixed-point algorithm. Since increasing $m$ beyond $m_{{opt}}(\lambda_k)$ increases the cost without increasing the success probability, the selection is restricted to a classical searching instance, Grover searching instances with $1\leq m<m_{{opt}}(\lambda_k)$, and an exact searching instance with $m=m_{{opt}}(\lambda_k)$. Their maximum success probability is given by eq.~\eqref{eq:max_P}.

Consequently, the maximum score attainable at a fixed depth $m$ is
\begin{equation}\label{eq:score_known_M}
  \max_{0\leq w<1}
  S(m,w,r\mid\mathbf{p}^{(k)})
  =
  \frac{P_{max}(\lambda_k,m)}{m+r},
\end{equation}

As the admissible iteration depths are therefore $\mathcal{A}_k=\{0,1,\ldots,m_{{opt}}(\lambda_k)\}$, the greedy policy selects the depth
\begin{equation}\label{eq:greedy_known_M}
  m_{{gr}}^*(N_k,M_k)
  =\underset{m\in\mathcal{A}_k}{\arg\max}\;
  \frac{P_{max}(\lambda_k,m)}{m+r}.
\end{equation}

Let $\mathbb{E}_{{B}}[\mathcal{T}\mid N_k,M_k]$ denote the minimum expected remaining scaled runtime from the state $(N_k,M_k)$ in the case of the greedy policy.

The Bellman equation in eq.~\eqref{eq:bellman_posterior} then reduces to
\begin{equation}\label{eq:bellman_known_M}
  \begin{aligned}
  \mathbb{E}_{{B}}[\mathcal{T}\mid N_k,M_k]
  =\min_{m\in\mathcal{A}_k}\Bigl\{&
    m+r
    +P_{max}(\lambda_k,m)
      \mathbb{E}_{{B}}[\mathcal{T}\mid N_k-1,M_k-1]\\
    &+\left[1-P_{max}(\lambda_k,m)\right]
      \mathbb{E}_{{B}}[\mathcal{T}\mid N_k-1,M_k]
  \Bigr\}.
  \end{aligned}
\end{equation}

The boundary conditions are $\mathbb{E}_{{B}}[\mathcal{T}\mid N_k,0]=0$ and $\mathbb{E}_{{B}}[\mathcal{T}_k\mid N_k,N_k]=rN_k$, where the second condition follows from the requirement that every returned element is individually verified. Let $m_{{B}}^*(N_k,M_k)$ be the number of iterations that minimizes $\mathbb{E}_{{B}}[\mathcal{T}\mid N_k,M_k]$. We denote by $\mathbb{E}_{{gr}}[\mathcal{T}\mid N,M]$ the expected scaled runtime of the greedy policy in eq.~\eqref{eq:greedy_known_M}. In Theorem~\ref{th:complexity_M_known} we prove that, if the number of solutions is low compared to the size of the database, then $\mathbb{E}_{{gr}}[\mathcal{T}\mid N,M]$ and $\mathbb{E}_{{B}}[\mathcal{T}\mid N,M]$ are asympthotically equal. 

\begin{theorem}\label{th:complexity_M_known}
Let $0<r\leq1$ be fixed and $M=o(N)$. The Bellman and greedy policies select Grover searching instances with $m_{{B}}^*\sim m_{{gr}}^*\sim\xi/(2\sqrt{\lambda_k})$ and success probability $P_\xi=\sin^2\xi$, where $\xi$ is the solution of $\tan \xi=2\xi$ in $(0,\pi/2)$. Their expected scaled runtimes satisfy
\begin{equation}\label{eq:known_M_asymptotic}
  \mathbb{E}_{{B}}[\mathcal{T}\mid N,M]
  \sim\mathbb{E}_{{gr}}[\mathcal{T}\mid N,M]
  \sim\frac{\xi\sqrt{N}}{2P_\xi}
  \sum_{j=1}^{M}\frac{1}{\sqrt{j}}.
\end{equation}
\end{theorem}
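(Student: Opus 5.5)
Throughout, I work in the sparse regime: every state visited from $(N,M)$ has $N_k\geq N-O(\text{total steps})$ and $M_k\leq M=o(N)$, so $\lambda_k\to0$ uniformly along all trajectories. The plan has three steps: analyse the single-instance greedy choice, sum the greedy runtime over the $M$ discovery phases, and show that the Bellman optimum cannot do better to leading order.

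\textbf{Step 1: greedy choice at a single state.} Write $\theta_k=\arcsin\sqrt{\lambda_k}$ and $x=(2m+1)\theta_k$. In the regime $m<m_{\mathrm{opt}}$ the greedy score is
\[
\frac{\sin^2((2m+1)\theta_k)}{m+r}=\frac{2\theta_k\sin^2 x}{x-\theta_k+2r\theta_k}.
\]
Since $\theta_k\to0$ and $r\leq1$ is fixed, this is $2\theta_k\,\sin^2x/x\,(1+O(\theta_k))$ uniformly on compact $x$-ranges.

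I then compare the three kinds of admissible instance.
\begin{itemize}
\item The function $\sin^2x/x$ has a unique maximiser on $(0,\pi/2]$, namely $\tan x=2x$, so $x=\xi$. Its value there is $\sin^2\xi/\xi=P_\xi/\xi$, which exceeds the value $2/\pi$ at $x=\pi/2$.
\item The exact instance, with $m_{\mathrm{opt}}$ iterations, therefore has the smaller score $\approx 2\theta_k\cdot2/\pi$.
\item The classical instance has score $\lambda_k/r=O(\theta_k^2)$, which is negligible.
\end{itemize}
Because the integer grid in $x$ has spacing $2\theta_k\to0$, the discrete maximiser satisfies $m_{gr}^*\sim\xi/(2\theta_k)\sim\xi/(2\sqrt{\lambda_k})$, and the success probability is $P_\xi+o(1)$. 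The score ratio bound also gives a per-step cost ratio, $(m+r)/P\to\xi/(2\theta_kP_\xi)$.

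\textbf{Step 2: greedy runtime.} Removing tested elements only changes $N_k$ by the total number of steps taken. The expected number of steps is $O(M)$, because each step succeeds with probability bounded below, and $M=o(N)$. Hence $N_k=N(1+o(1))$ throughout.

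The process splits into phases $j=M,M-1,\dots,1$, where phase $j$ runs while $j$ solutions remain. Within phase $j$ the instances are i.i.d. up to $1+o(1)$ corrections. By Wald's identity, the expected cost of phase $j$ is
\[
\frac{m+r}{P}\sim\frac{\xi}{2P_\xi}\sqrt{\frac{N}{j}}.
\]
Summing over $j$ gives eq.~\eqref{eq:known_M_asymptotic}. One case needs care: for bounded $j$ the $1+o(1)$ factors are uniform, and for large $j$ we still have $\theta\to0$ because $j\leq M=o(N)$. So the relative errors are uniformly $o(1)$ and survive the summation.

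\textbf{Step 3: Bellman lower bound (the main obstacle).} Here I must show that no adaptive policy does better. I would define a potential
\[
\Phi(N',j)=c\sqrt{N'}\sum_{i\leq j}i^{-1/2},\qquad c=\frac{\xi}{2P_\xi}(1-\eta),
\]
and verify by induction that $\mathbb{E}_B\geq\Phi$.

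The Bellman step requires
\[
m+r\geq P\,[\Phi(N'-1,j)-\Phi(N'-1,j-1)]-[\Phi(N'-1,j)-\Phi(N',j)],
\]
that is, $m+r\geq P\,c\sqrt{N'/j}$ up to small terms. This holds because $P/(m+r)\leq(1+o(1))\,2\theta P_\xi/\xi$ for every admissible $m$, by the Step~1 maximisation, which also covers the classical and exact instances. The shrinkage term $\Phi(N'-1,j)-\Phi(N',j)=O\big(\sqrt{j/N'}\big)$ is lower order relative to $m+r\gtrsim r$ only after some care. I would handle it by absorbing it into the factor $\eta$, using the fact that $\sqrt{j/N'}\to0$.

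The boundary values are nonnegative, so the induction closes, and letting $\eta\to0$ gives the matching lower bound. Finally, optimality of the Bellman policy against the upper bound from Step 2 shows that near-optimal choices must have score within $1+o(1)$ of the greedy maximum. By uniqueness of the maximiser $\xi$, this forces $m_B^*\sim\xi/(2\sqrt{\lambda_k})$.
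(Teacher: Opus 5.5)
Steps 1 and 2 are essentially fine. Bounding $\mathbb{E}_{B}$ above by the greedy cost and below by a potential is a genuinely different route from the paper's, which rewrites the Bellman equation in terms of $\Delta_{B}$ and sums it over $j$. Step 3, however, has two concrete gaps.

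\textbf{The induction for $\mathbb{E}_{B}\geq\Phi$ does not close.} The Bellman recursion from $(N,M)$ runs over the whole table $\{(N',j):N'<N,\ j\leq\min(M,N')\}$. So your opening claim that $\lambda_k\to0$ along all trajectories is false; it holds only with high probability. In Step 2 a Markov-bound remark repairs this, but in Step 3 it breaks the argument.

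In fact the inequality you propose to prove is false on the non-sparse part of the table. Exhaustive classical testing gives $\mathbb{E}_{B}[\mathcal{T}\mid N',j]\leq rN'$, while $\Phi(N',j)\approx2c\sqrt{N'j}$. At the paper's boundary, $\mathbb{E}_{B}[\mathcal{T}\mid N',N']=rN'<\Phi(N',N')\approx1.38(1-\eta)N'$. So ``the boundary values are nonnegative'' is not what the induction needs. The inductive step also fails there, because your Step~1 bound $P/(m+r)\leq(1+o(1))\,2\theta P_\xi/\xi$ only holds for small $\theta$.

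A simple repair is to use $\tilde\Phi(N',j)=\Phi(N',j)-\alpha(N-N')$ with $\alpha=2\Phi(N,M)/N=O(\sqrt{M/N})$:
\begin{itemize}
\item $\tilde\Phi\leq0\leq\mathbb{E}_{B}$ for every $N'\leq N/2$, so the induction can start there.
\item On $N/2<N'\leq N$, $j\leq M$, one has $\lambda\leq2M/N\to0$, so the Step~1 bound holds uniformly.
\item The only change in the step inequality is an extra $+\alpha$ on the right, absorbed into $\eta r$ exactly like the shrinkage term.
\end{itemize}
This yields $\mathbb{E}_{B}[\mathcal{T}\mid N,M]\geq\tilde\Phi(N,M)=\Phi(N,M)$. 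Equivalently, stop the process when it leaves the sparse region and use that the optimal cost is $O(\sqrt{NM})=o(rN)$.

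\textbf{The per-state statement about $m_{B}^*$ does not follow.} At $(N',j)$ the Bellman action minimises $m+r-P_{max}(\lambda,m)\,\Delta_{B}(N'-1,j)$. It is therefore governed by the local increment $\Delta_{B}(N'-1,j)\asymp\sqrt{N'/j}$, not by the total cost. Your sandwich $\mathbb{E}_{B}=(1\pm\eta)\Phi$ determines $\Delta_{B}$ only up to $O(\eta\sqrt{N'j})$, which exceeds $\Delta_{B}$ itself once $j$ is of order $1/\eta$. Near-optimality of the total only says that low-score actions carry a vanishing share of the expected cost. That is compatible with a poor action at any individual state; for example, the first action from $(N,M)$ is only an $O(1/M)$ share of the total.

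The paper gets the depth from exactly this local quantity. It rewrites eq.~\eqref{eq:bellman_known_M} as $\mathbb{E}_{B}[\mathcal{T}\mid N_k,M_k]-\mathbb{E}_{B}[\mathcal{T}\mid N_k-1,M_k]=\min_{m}\{m+r-P_{max}\Delta_{B}(N_k-1,M_k)\}$. It then asserts that the left-hand side is $o(1/\sqrt{\lambda_k})$, which gives $\Delta_{B}=\min_m(m+r)/P_{max}+o(1/\sqrt{\lambda_k})$, the greedy minimisation. Uniqueness of $\xi$ then fixes $m_{B}^*$ state by state. You need this ingredient, an a priori bound on the $N$-increment of $\mathbb{E}_{B}$, and your global potential argument does not supply it.
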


\begin{proof}
Define $\Delta_{{B}}(N_k,M_k)=\mathbb{E}_{{B}}[\mathcal{T}\mid N_k,M_k]-\mathbb{E}_{{B}}[\mathcal{T}\mid N_k,M_k-1]$.
Equation~\eqref{eq:bellman_known_M} is equivalent to
\[
  \mathbb{E}_{{B}}[\mathcal{T}\mid N_k,M_k]
  -\mathbb{E}_{{B}}[\mathcal{T}\mid N_k-1,M_k]\\
  =\min_{m\in\mathcal{A}_k}
  \left\{m+r-P_{max}(\lambda_k,m)
  \Delta_{{B}}(N_k-1,M_k)\right\}.
\]
For $M=o(N)$, the left-hand side is of lower order than $1/\sqrt{\lambda_k}$. Therefore,
\[
  \Delta_{{B}}(N_k,M_k)
  =\min_{m\in\mathcal{A}_k}
  \frac{m+r}{P_{max}(\lambda_k,m)}
  +o\!\left(\frac{1}{\sqrt{\lambda_k}}\right).
\]
This is the same minimization used in the greedy policy score described by eq.~\eqref{eq:greedy_known_M}. Let $\theta_k=\arcsin\sqrt{\lambda_k}$ and $u=(2m+1)\theta_k$. For $m<m_{{opt}}(\lambda_k)$,
\[
  \frac{m+r}{P_{max}(\lambda_k,m)}
  =\frac{u+(2r-1)\theta_k}{2\theta_k\sin^2u}.
\]
For fixed $0<r\leq1$ and $\theta_k\to0$, minimizing this expression is equivalent at leading order to minimizing $u/\sin^2u$ over $u\in(0,\pi/2]$, with the unique minimum on this interval at $u=\xi$, where $\tan\xi=2\xi$. Because $\xi<\pi/2$, the selected searching instance uses the original Grover phases and has success probability approaching $P_\xi$. For $j$ remaining solutions,
\[
  \frac{m_j^*+r}{P_{max}(j/N,m_j^*)}
  =\frac{\xi\sqrt{N}}{2P_\xi\sqrt{j}}+\mathcal{O}(1).
\]
Summing over $j=1,\ldots,M$ and using $\mathcal{O}(M)=o(\sqrt{NM})$ proves eq.~\eqref{eq:known_M_asymptotic}.
\end{proof}

Theorem~\ref{th:complexity_M_known} proves that, for a known number of solutions, the greedy policy provides asymthotical identical runtime as the optimization of the Bellman equation in the limit of low solution ratio. Further more, the constants $\xi$ and $P_\xi$ are the same results obtained by ~\cite{boyer1998tight} for finding only one solution of the searching problem with non-deterministic searching instances, proving that this strategy is asymthotically optimal also in the case of returning all solutions in the low solution ratio approximation. 

If $M\to\infty$ while $M/N\to0$, then
$\sum_{j=1}^{M}j^{-1/2}\sim2\sqrt{M}$, and
Eq.~\eqref{eq:known_M_asymptotic} becomes
\begin{equation}\label{eq:known_M_large_M}
  \mathbb{E}_{{B}}[\mathcal{T}\mid N,M]
  \sim
  \mathbb{E}_{{gr}}[\mathcal{T}\mid N,M]
  \sim
  \frac{\xi}{P_\xi}\sqrt{NM}
  \approx1.38005\sqrt{NM}.
\end{equation}

An exact searching instance at every stage has asymptotic scaled runtime $\pi\sqrt{NM}/2\approx1.57080\sqrt{NM}$. The Bellman and greedy expected costs therefore reduce the leading term by approximately $12.14\%$. The advantage of the Bellman policy is expected to become significant as the number of solutions increases and database shrinking has a stronger effect. In that regime, using shorter searching instances can reduce the remaining search space.

\subsection{Limits of the YLC fixed-point searching instances}

Because fixed-point generalizations introduce a continuous parameter $w$, which is optimized to maximize the score in eq.~\eqref{eq:score}, they add an extra layer of complexity to the complete-search problem. To simplify the analysis, we derive the transition conditions under which a choice $w\neq 0$ improves the score of a search instance and also quantify the improvement in the greedy score compared to the one of the unmodified Grover's algorithm. For this, we define the best score obtained by Grover's algorithm over a posterior distribution $\mathbf{p}$ as $S^*_{{Grover}}=\max_{m\in\mathbb{Z}_{\geq0}}S(m,0,r\mid\mathbf{p})$, including the score of the classical search in the case of $m=0$.

Given a searching algorithm described by the success probability distribution as function of the solution ratio, $\mathbf{P}(\theta)$, with $\theta=\arcsin(\sqrt{\lambda})$, we define its associated failure amplitude as $\mathbf{P}(\theta)=1-|A(\theta)|^2$, noting that multiple failure amplitudes can obtain the same success probability distribtuion. The main properties characterizing $A(\theta)$ will then be $A(0)=1$, $A(\pi/2)=0$ and $A(\theta)$ can be written as a polynomial of degree at most $2m+1$ in $e^{i\theta}$. Given the Laurent expansion of the failure amplitude:
\begin{equation}\label{eq:positive_fourier_amplitude}
  A(\theta)=\sum_{j=0}^{L}\pi_j e^{i(2j-L)\theta},
\end{equation}
the condition $A(0)=1$ will impose $|\sum_{j=0}^{L}\pi_j|=1$. Also, for any phase-modified search algorithm, there exists a failure amplitude $A(\theta)$ such that $A(\theta)/\cos(\theta)$ is a polynomial in $\sin^2\theta$ of degree at most m ~\cite{rall2023amplitude}. This implies that $A(-\theta)=A(\theta)$, so eq.\eqref{eq:positive_fourier_amplitude} can be restricted to $\pi_j=\pi_{L-j}$.

The YLC failure amplitude is $A_{{YLC}}(\theta)=T_L(\gamma\cos\theta)/T_L(\gamma)$, so that $P(\lambda,m,w)=1-|A_{{YLC}}(\theta)|^2$. The Chebyshev polynomial $T_L(\gamma\cos\theta)$ can be obtained from the trace of the matrix $\mathbf{B}_{\theta}^L(\gamma)$, where
\begin{equation}\label{eq:chebyshev_transfer_matrix}
    \mathbf{B}_{\theta}(\gamma)
   =\begin{pmatrix}
     \gamma e^{i\theta} & \sqrt{\gamma^2-1} e^{i\theta}\\
      \sqrt{\gamma^2-1} e^{-i\theta} & \gamma e^{-i\theta}
   \end{pmatrix},
\end{equation}
Taking the trace of the $L$th power in eq.~\eqref{eq:chebyshev_transfer_matrix} gives $\operatorname{tr}(\mathbf{B}_{\theta}^L(\gamma))=2T_L(\gamma\cos\theta)$, which can be expanded using coefficients $b_j(\gamma)\geq0$:
\begin{equation}\label{eq:chebyshev_expansion}
    2T_L(\gamma\cos\theta)=\sum_{j=0}^{L}b_j(\gamma) e^{i(2j-L)\theta}.
\end{equation}
Setting $\theta=0$ in eq.~\eqref{eq:chebyshev_expansion} gives $\sum_{j=0}^{L}b_j(\gamma)=2T_L(\gamma)$. We may therefore define $\pi_j(\gamma)=b_j(\gamma)/[2T_L(\gamma)]$, for which $\sum_{j=0}^{L}\pi_j=1$, and write the failure amplitude $A_{{YLC}}(\theta)$ according to eq.\eqref{eq:positive_fourier_amplitude}.

The result provided in Theorem~\ref{th:phase_score_bound} proves that, for any $0<r\leq1/4$, a searching instance that accepts a failure amplitude $A(\theta)$ whose Laurent expansion factors are positive, cannot improve the best score obtained by unmodified Grover's algorithm or classical search. 

\begin{theorem}\label{th:phase_score_bound}
For every posterior distribution $\mathbf{p}^{(k)}$ and $0<r\leq1/4$, any failure amplitude $A(\theta)=\sum_{j=0}^{L}\pi_j e^{i(2j-L)\theta}$, with $\pi_j\geq 0$ $\forall j\in\{0, 1,\ldots, L\}$ obeys
\begin{equation}\label{eq:phase_score_bound}
  \frac{\displaystyle
    \sum_{i=0}^{N_k}p_i^{(k)}
    \left[1-\left|A\!\left(\arcsin\sqrt{i/N_k}\right)\right|^2\right]}
  {m+r}
  \leq S^*_{{Grover}}.
\end{equation}
The constant $1/4$ is sharp.
\end{theorem}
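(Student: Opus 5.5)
The plan is to reduce the statement to a pointwise inequality in $\theta$ and then use linearity in the posterior. Since $S^*_{Grover}$ is the maximum of the Grover scores over all depths $m'\geq0$, it is enough to find nonnegative weights $c_0,c_1,\ldots,c_m$ such that, for every $\theta\in[0,\pi/2]$,
\begin{equation*}
  1-|A(\theta)|^2\leq\sum_{m'=0}^{m}c_{m'}\sin^2\bigl((2m'+1)\theta\bigr),
  \qquad
  \sum_{m'=0}^{m}c_{m'}(m'+r)\leq m+r .
\end{equation*}
Averaging the first inequality against $\mathbf{p}^{(k)}$, with $\theta_i=\arcsin\sqrt{i/N_k}$, gives
\[
  \sum_{m'}c_{m'}\,\mathbb{E}\bigl[P_{\mathrm{Grover}}(\lambda_k,m')\bigr]\leq\sum_{m'}c_{m'}(m'+r)\,S^*_{Grover}\leq(m+r)\,S^*_{Grover},
\]
which is eq.~\eqref{eq:phase_score_bound}. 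The term $m'=0$ is $\sin^2\theta=\lambda$, which is the classical instance, and it is the only place where $r$ enters.

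To build these weights, I first use the two structural facts stated before the theorem. Since $\pi_j\geq0$ and $A(0)=1$, we get $\sum_j\pi_j=1$. Since $\pi_j=\pi_{L-j}$, $A$ is real and can be written as
\[
  A(\theta)=\sum_j\pi_j\cos(k_j\theta),\qquad k_j=|2j-L|\leq 2m+1,
\]
with every $k_j$ odd. Expanding the product gives
\begin{equation*}
  1-A^2=\sum_{j,l}\pi_j\pi_l\bigl(1-\cos k_j\theta\cos k_l\theta\bigr)
  =\sum_{j,l}\pi_j\pi_l\Bigl[\sin^2\tfrac{(k_j+k_l)\theta}{2}+\sin^2\tfrac{(k_j-k_l)\theta}{2}\Bigr].
\end{equation*}
This writes the success probability as a convex combination, with total weight $2$, of terms $\sin^2(n\theta)$ where the integers $n=(k_j\pm k_l)/2$ are at most $2m+1$. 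The terms with odd $n=2m'+1$ are already Grover probabilities. I would charge each of them to the cost $m'+r$ and check the budget, using $\sum_{j,l}\pi_j\pi_l\,\tfrac{k_j+k_l}{2}\leq 2m+1$ together with the fact that the difference terms have smaller frequency.

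The main obstacle is the terms with even $n$, including $n=0$, which contributes nothing. These terms are not Grover probabilities, so each must be dominated by a combination of neighbouring odd-frequency terms and the classical term. My first attempt would be an inequality of the form
\[
  \sin^2(2a\theta)\leq\tfrac12\bigl[\sin^2((2a-1)\theta)+\sin^2((2a+1)\theta)\bigr]+\kappa\sin^2\theta .
\]
It comes from the identity
\[
  \tfrac12\bigl[\sin^2(x-\theta)+\sin^2(x+\theta)\bigr]-\sin^2x=-\cos 2x\,\sin^2\theta ,
\]
which shows that $\kappa=1$ always suffices. The corrective classical term then costs $\kappa r$, and I expect the accounting to close exactly when $4r\leq1$, because one unit of frequency costs $1/2$ iteration while the even-frequency corrections accumulate a classical weight of order $2$ against savings of order $1/2$. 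Verifying that this bookkeeping stays within the budget $m+r$ uniformly in $m$ is the step I expect to require the most care. If the naive charging fails, I would instead optimise the weights $c_{m'}$ directly as a small linear program for general $m$.

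For sharpness, I would exhibit a point-mass posterior with $\lambda\to0$ and a low-degree positive amplitude. The first candidate is $A(\theta)=\tfrac12\bigl(\cos\theta+\cos3\theta\bigr)$, a depth-one instance whose success probability has a small-$\lambda$ expansion with a different ratio to the Grover and classical probabilities. I would compare its score with $\max\{\lambda/r,\ \sin^2 3\theta/(1+r)\}$ and, if needed, tune the $\pi_j$ (or choose $\lambda$ near the classical–Grover crossover) so that the inequality fails for every $r>1/4$.
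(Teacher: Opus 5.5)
\textbf{The inequality.} Your certificate framing is what the paper does in effect. It writes $1-|A|^2=2\mathbb{E}[\sin^2(J\theta)]$ with $J=|Z-Z'|$ for $Z,Z'$ i.i.d.\ with law $(\pi_j)$; your sum/difference expansion is the symmetrized form of this. It then uses the same even-frequency inequality with $\kappa=1$. Your identity has a harmless sign slip: the correct form is $\tfrac12[\sin^2(x-\theta)+\sin^2(x+\theta)]-\sin^2x=+\cos 2x\,\sin^2\theta$.

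The budget step you left open is the heart of the proof. The hint $\sum_{j,l}\pi_j\pi_l\tfrac{k_j+k_l}{2}\le 2m+1$ does not close it, because charging sum and difference terms separately overshoots. Charge them per pair instead. Write $k_j=2X_j+1$ with $X_j\in\{0,\dots,m\}$. Then $n^++n^-=\max(k_j,k_l)$ is odd, so exactly one of $n^\pm$ is odd, and the even one is $0$ iff $k_j=k_l$. Put weight $1$ on the Grover depth of the odd term. For the even term $2a\ge2$, put weights $\tfrac12,\tfrac12,1$ on depths $a-1$, $a$ and the classical instance.
\begin{itemize}
\item If $k_j=k_l$, the pair costs $X_j+r\le m+r$.
\item If $k_j\neq k_l$, the pair costs $\max(X_j,X_l)+3r-\tfrac12$, which is at most $m+r$ for every pair exactly when $r\le1/4$.
\end{itemize}
So the certificate holds pointwise in $(j,l)$, and no averaging is needed. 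This is shorter than the paper's route, which bounds $2\mathbb{E}[c_r(J)]$ via $\mathbb{E}|Z-Z'|\le L/2$ at $r=0$ and at $r=1/4$ and then interpolates affinely in $r$. The paper's argument does not need $\pi_j=\pi_{L-j}$ and yours does. You can remove that assumption: $\mathrm{Re}\,A$ is the amplitude with the nonnegative symmetrized coefficients $(\pi_j+\pi_{L-j})/2$, and $1-|A|^2\le 1-(\mathrm{Re}\,A)^2$.

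\textbf{Sharpness: a genuine gap.} The counterexample you propose cannot exist.

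Point masses fail. Take a point mass at $\theta$ and depth $m$.
\begin{itemize}
\item If $(2m+1)\theta\le\pi/2$, then $|A(\theta)|\ge\mathrm{Re}\,A(\theta)=\sum_j\pi_j\cos((2j-L)\theta)\ge\cos((2m+1)\theta)\ge0$, so Grover at the same depth is at least as good. This is exactly the regime $\lambda\to0$ at fixed depth. Your candidate gives $1-(1-\lambda)(1-2\lambda)^2\approx5\lambda$, against $\sin^2 3\theta\approx9\lambda$ for Grover.
\item If $(2m+1)\theta>\pi/2$, the score is at most $1/(m_{\mathrm{opt}}+r)$. When $m_{\mathrm{opt}}=1$ we have $\lambda\ge1/4$, so the classical score $\lambda/r\geq 1/(1+r)$ for $r\le 1/3$. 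When $m_{\mathrm{opt}}\ge2$ and $r\le1/3$, Grover at depth $m_{\mathrm{opt}}-1$ wins.
\end{itemize}
Hence no point mass violates the bound anywhere on $(1/4,1/3]$.

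Tuning the $\pi_j$ at depth one does not help either, for any posterior. With $A=p\cos3\theta+(1-p)\cos\theta$, the predictive success is $(1-p)(1+4p)g_1+pg_3-2p(1-p)\sigma$, where $\sigma=\mathbb{E}[2\lambda(1-2\lambda)^2]$. Beating $(1+r)S^*_{Grover}$ then forces $g_1\approx rS^*_{Grover}$, $g_3\approx(1+r)S^*_{Grover}$ and $\sigma=O((4r-1)g_1)$. The last condition concentrates the $\lambda$-weighted posterior near $\lambda=1/2$, where $g_3/g_1\approx1$. That contradicts $g_3/g_1\approx(1+r)/r\approx5$ as $r\to1/4^+$.

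The missing idea is that a violation near $1/4$ needs unbounded depth and a genuinely mixed posterior. The paper's construction has these ingredients:
\begin{itemize}
\item $m$ odd;
\item an atom at a small angle $\vartheta$, tuned so that the classical and depth-$m$ Grover scores tie and dominate every other depth;
\item a small atom at $\lambda=1/2$, where the even-frequency inequality is tight because $\cos(4m\cdot\pi/4)=-1$;
\item the perturbed amplitude $\pi_0=\pi_L=(1-\rho)/2$, $\pi_1=\pi_{L-1}=\rho/2$.
\end{itemize}
This amplitude gains $2\rho\Delta-\rho^2\mathcal{B}>0$ over $g_L$, for $r$ in an interval that shrinks to $1/4$ as $m\to\infty$.
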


\begin{proof}
For the fixed posterior $\mathbf{p}^{(k)}$, we define the auxiliary moments
\begin{equation}\label{eq:aux_moments}
  g_j
  =\sum_{i=0}^{N_k}p_i^{(k)}
  \sin^2\!\left(j\arcsin\sqrt{\frac{i}{N_k}}\right),
\end{equation}
and by using the inequality:
\begin{equation*}
  \sin^2(2\ell\theta)
  \leq
  \frac{\sin^2((2\ell-1)\theta)+\sin^2((2\ell+1)\theta)}{2}
  +\sin^2\theta,
\end{equation*}
we get that
 \begin{equation*}
 g_{2\ell}\leq\frac{g_{2\ell-1}+g_{2\ell+1}}{2}+g_1.
 \end{equation*}
Since $g_{2\ell+1}=\norm{\mathbf{s}(\ell,0,\mathbf{p}^{(k)})}$, eq.~\eqref{eq:score} gives $g_{2\ell+1}\leq(\ell+r)S^*_{{Grover}}$. Combined with the preceding inequality, this provides the additional inequality $g_{2\ell}\leq(\ell+2r-1/2)S^*_{{Grover}}$. Therefore,
\begin{equation}\label{eq:moment_coefficient_bound}
  g_j\leq c_r(j)S^*_{{Grover}},
\end{equation}
where
\begin{equation}\label{eq:cr_definition}
  c_r(j)=
  \begin{cases}
    0, & j=0,\\
    (j-1)/2+r, & j\geq1\text{ odd},\\
    j/2+2r-1/2, & j\geq2\text{ even}.
  \end{cases}
\end{equation}

Let $Z$ and $Z'$ be independent random variables with $\Pr(Z=j)=\Pr(Z'=j)=\pi_j$, where $\pi_j$ are the positive Fourier coefficients in the decomposition of $A(\theta)$ given by eq.~\eqref{eq:positive_fourier_amplitude}, and let $J=|Z-Z'|$. Then $|A(\theta)|^2=\sum_{j=0}^{L}\sum_{k=0}^{L}\pi_j\pi_k e^{2i(j-k)\theta}=\mathbb{E}[\cos(2(Z-Z')\theta)]$. Since $\cos(2x)=1-2\sin^2x$ and $\sin^2x$ is even, this gives $1-|A(\theta)|^2=2\mathbb{E}[\sin^2(J\theta)]$.

Consequently,
\begin{align*}
  \sum_{i=0}^{N_k}p_i^{(k)}
  \left[1-\left|A(\theta_i)\right|^2\right]
  &=2\sum_{i=0}^{N_k}p_i^{(k)}
  \mathbb{E}\!\left[\sin^2(J\theta_i)\right] \\
  &=2\mathbb{E}\!\left[
    \sum_{i=0}^{N_k}p_i^{(k)}\sin^2(J\theta_i)
  \right] \\
  &=2\mathbb{E}[g_J].
\end{align*}
Together with eq.~\eqref{eq:moment_coefficient_bound}, this gives
\begin{equation}\label{eq:phase_moment_reduction}
  \sum_{i=0}^{N_k}p_i^{(k)}
  \left[1-\left|A(\theta_i)\right|^2\right]
  =2\mathbb{E}[g_J]
  \leq2S^*_{{Grover}}\mathbb{E}[c_r(J)].
\end{equation}
For $r=0$, $2c_0(J)=\max\{J-1,0\}$. Since $0\leq J\leq L$ and two independent random variables supported on an interval of length $L$ satisfy $\mathbb{E}|Z-Z'|\leq L/2$,
\[
  2\mathbb{E}[c_0(J)]
  \leq\frac{L-1}{L}\mathbb{E}[J]
  \leq m.
\]
For $r=1/4$, we have
\[
  2c_{1/4}(J)
  \leq
  |2c_{1/4}(Z)-2c_{1/4}(Z')|.
\]
The random variables $2c_{1/4}(Z)$ and $2c_{1/4}(Z')$ are independent and identically distributed, with support contained in $[0,L-1/2]$. Therefore, the same mean-distance bound gives
\[
  2\mathbb{E}[c_{1/4}(J)]
  \leq\frac{L-1/2}{2}
  =m+\frac14.
\]
Both sides of $2\mathbb{E}[c_r(J)]\leq m+r$ are affine in $r$. Interpolation between $r=0$ and $r=1/4$ therefore proves eq.~\eqref{eq:phase_score_bound}.

To prove sharpness, fix an odd $m\geq3$ and define $U_\ell(\vartheta)=\sin^2((2\ell+1)\vartheta)-\sin^2\vartheta$. There exists $\vartheta\in(\pi/[4(2m-1)],\pi/(4m+2))$ such that $mU_{m-1}(\vartheta)=(m-1)U_m(\vartheta)$. Indeed, $H_m(\vartheta)=mU_{m-1}(\vartheta)-(m-1)U_m(\vartheta)$ changes sign at the endpoints of this interval. To compare this depth with every other depth, let $h=2\vartheta$ and $D_\ell=\ell[\cos h-\cos((2m+1)h)]-m[\cos h-\cos((2\ell+1)h)]$. The balancing condition gives $D_0=D_{m-1}=D_m=0$, while $D_{\ell+2}-2D_{\ell+1}+D_\ell=-4m\sin^2h\cos((2\ell+3)h)$.
The changes between discrete concavity and convexity determined by the sign of the cosine, together with the bound $\cos h-\cos((2m+1)h)>1$ for the remaining tail, give $D_\ell\geq0$ for every $\ell$. Since $D_\ell=2[\ell U_m(\vartheta)-mU_\ell(\vartheta)]$, it follows that
\begin{equation}\label{eq:sharpness_balanced_depth}
  mU_\ell(\vartheta)\leq\ell U_m(\vartheta)
  \quad\text{for every }\ell\geq0.
\end{equation}
Moreover, $U_m(\vartheta)>1/2$.
Let $a=\sin^2\vartheta$, $b=\sin^2((2m+1)\vartheta)$, and $c=\sin^2(2m\vartheta)$. The balancing relation also gives
\begin{equation}\label{eq:sharpness_critical_ratio}
  \frac{m(a+b-c)}{2(b-a)}
  =\frac14+\frac{1-\tan^2\vartheta}{8(2m-1)}.
\end{equation}
Since $a+c-b=-2\sin\vartheta\sin(2m\vartheta)\cos((2m+1)\vartheta)<0$, the value $\kappa=b-a-c+m^{-3}$ is positive. Set $X=(1+\kappa)^{-1}$ and $Y=\kappa/(1+\kappa)$, so that $X,Y>0$ and $X+Y=1$. Consider the two-point posterior $\mathbf{p}$ that assigns weights $X$ and $Y$ to the angles $\vartheta$ and $\pi/4$, respectively, and choose $r=m(a+\kappa/2)/(b-a)$.
For this posterior, direct testing and the depth-$m$ Grover searching instance have the same score. Equation~\eqref{eq:sharpness_balanced_depth} shows that this score equals $\max_{\ell\in\mathbb{Z}_{\geq0}}S(\ell,0,r\mid\mathbf{p})$, while eq.~\eqref{eq:sharpness_critical_ratio} gives
\begin{equation}\label{eq:sharpness_ratio_interval}
  \frac14<r<\frac14+\frac{1}{8(2m-1)}+\frac{1}{m^2}.
\end{equation}

Let $g_j$ denote the corresponding moments, set $\rho=X/(4m^3)$, and choose $\pi_0=\pi_L=(1-\rho)/2$ and $\pi_1=\pi_{L-1}=\rho/2$, with all other coefficients equal to zero. These coefficients satisfy eq.~\eqref{eq:positive_fourier_amplitude}. Moreover, $\Delta=g_1+g_{L-1}-g_L=X/m^3=4\rho$ and $\mathcal{B}=2g_1+2g_{L-1}-g_L-g_{L-2}\leq4$.
Let $\mathbf{s}$ be the scoring vector associated with this amplitude. Its norm, which is the posterior-predictive success probability defined in eq.~\eqref{eq:score_vector}, satisfies $\norm{\mathbf{s}}-g_L=2\rho\Delta-\rho^2\mathcal{B}=\rho^2(8-\mathcal{B})>0$. It therefore exceeds the complete envelope at the same depth and cost. Since the upper endpoint in eq.~\eqref{eq:sharpness_ratio_interval} approaches $1/4$ as $m\to\infty$, a counterexample exists below every prescribed $R>1/4$.
\end{proof}

The main consequence of Theorem~\ref{th:phase_score_bound} is that, for $0<r\leq1/4$, the greedy optimization may be restricted to $w=0$ without reducing the selected score. The same conclusion holds for every searching algorithm whose failure amplitude admits the decomposition in eq.~\eqref{eq:positive_fourier_amplitude} with all factors positive. Although the value $r=1/4$ is sharp for this full amplitude class, the limiting testing ratio for YLC fixed-point search under a particular posterior distribution may be higher. Moreover, any additional runtime required to implement $\mathbf{U}_k$ and $\mathbf{U}_k^\dagger$ increases the cost of a quantum iteration and therefore lowers the effective value of $r$. This dependency of the testing ratio with each searching instance is an additional advantage of the greedy model.

For the admissible searching instances, $w=0$ reduces the YLC fixed-point search to Grover search. Theorem~\ref{th:phase_score_bound} therefore gives, for $0<r\leq1/4$,
\begin{equation}\label{eq:ylc_no_score_advantage}
  \max_{\substack{m\in\mathbb{Z}_{\geq0}\\0\leq w<1}}
  S(m,w,r\mid\mathbf{p}^{(k)})
  =
  \max_{m\in\mathbb{Z}_{\geq0}}
  S(m,0,r\mid\mathbf{p}^{(k)}).
\end{equation}
We next quantify the possible score improvement as function of $r$ and $\mathbf{p}^{(k)}$.

\begin{theorem}\label{th:phase_score_gap}
For every posterior distribution $\mathbf{p}^{(k)}$, every $m\geq1$, and every $0\leq w<1$,
\begin{equation}\label{eq:phase_depth_score_gap}
  S(m,w,r\mid\mathbf{p}^{(k)})-S^*_{{Grover}}
  \leq
  S^*_{{Grover}}\frac{\delta_m(r)}{m+r},
\end{equation}
where
\begin{equation}\label{eq:phase_depth_excess}
  \delta_m(r)=
  \begin{cases}
    0, & 0<r\leq1/4,\\[2mm]
    \dfrac{4r-1}{2}
    \dfrac{\sinh(m\eta_r)}{\sinh((m+1)\eta_r)},
      & 1/4<r\leq1,
  \end{cases}
\end{equation}
and, for $r>1/4$, $\eta_r>0$ is defined by $\cosh\eta_r=\frac{4r}{4r-1}.$
\end{theorem}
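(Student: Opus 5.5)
The plan is to reuse the moment reduction from the proof of Theorem~\ref{th:phase_score_bound} and then solve the resulting extremal problem exactly. For the YLC phases the failure amplitude $A_{\mathrm{YLC}}$ admits the decomposition in eq.~\eqref{eq:positive_fourier_amplitude} with $\pi_j(\gamma)=b_j(\gamma)/[2T_L(\gamma)]\geq0$, $\sum_j\pi_j=1$ and $\pi_j=\pi_{L-j}$. Then eq.~\eqref{eq:phase_moment_reduction} gives
\[
  (m+r)\,S(m,w,r\mid\mathbf{p}^{(k)})\leq S^*_{\mathrm{Grover}}\cdot 2\mathbb{E}[c_r(J)],\qquad J=|Z-Z'|.
\]
It therefore suffices to prove the purely combinatorial inequality
\[
  2\mathbb{E}[c_r(J)]\leq m+r+\delta_m(r)
\]
for every symmetric probability vector $(\pi_0,\dots,\pi_L)$ with $L=2m+1$. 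Dividing by $m+r$ and subtracting $S^*_{\mathrm{Grover}}$ then yields eq.~\eqref{eq:phase_depth_score_gap}. The case $0<r\leq1/4$ is exactly what was shown in the proof of Theorem~\ref{th:phase_score_bound}, so $\delta_m(r)=0$ there.

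For $1/4<r\leq1$ I would write
\[
  2c_r(j)=2c_{1/4}(j)+(4r-1)\,\chi(j),\qquad
  \chi(j)=\tfrac12\mathbf{1}[j\text{ odd}]+\mathbf{1}[j\geq2\text{ even}].
\]
The problem then becomes the maximization of the quadratic form $Q(\pi)=\pi^{\top}C\pi$ over the simplex, where $C$ is the Toeplitz matrix with entries $C_{jk}=2c_r(|j-k|)$. The bound $2\mathbb{E}[c_{1/4}(J)]\leq m+1/4$ from Theorem~\ref{th:phase_score_bound} handles the first part, but it is not tight jointly with the $\chi$ term. So I would treat $Q$ directly rather than bounding the two parts separately.

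The key step is to use the KKT conditions at a maximizer supported on the full index set. At the maximizer $(C\pi)_j$ must be constant on the support. Taking second differences of $(C\pi)_j$ in $j$ removes the linear part $|j-k|$ and leaves the parity structure. This produces a three-term linear recurrence for $\pi$ (or for its even/odd interleaved components) of the form
\[
  x_{\ell+1}-2\cosh\eta\,x_\ell+x_{\ell-1}=0,\qquad \cosh\eta=\frac{4r}{4r-1},
\]
which is where $\eta_r$ enters. Its symmetric solution with the boundary conditions at $j=0$ and $j=L$ is a combination of $\cosh$ and $\sinh$ of $\ell\eta_r$. Substituting it back into $Q$ should give the maximal value
\[
  m+r+\frac{4r-1}{2}\,\frac{\sinh(m\eta_r)}{\sinh((m+1)\eta_r)}.
\]
To make this rigorous without checking every support pattern, I would convert it into a dual certificate. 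Using the recurrence solution $y$, I would construct a vector $f$ such that
\[
  C_{jk}\leq \mu+f_j+f_k-(\text{a positive semidefinite correction})
\]
for all $j,k$, with $\mu=m+r+\delta_m(r)$ and $\sum_j\pi_jf_j\le0$ guaranteed by symmetry. Then $Q(\pi)\leq\mu$ holds for every $\pi$ in the simplex.

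I expect the main obstacle to be this last step. Establishing global optimality of the recurrence solution requires showing that the KKT point is the maximum and not a saddle point, because $C$ is not negative semidefinite on the simplex. Equivalently, one must verify that the certificate inequality holds entrywise for all pairs $(j,k)$, including pairs of different parity. My first attempt would be an induction on $m$ that relates the depth-$m$ problem to depth $m-1$ by peeling off the endpoints $\pi_0=\pi_L$. This mirrors how $\sinh(m\eta)/\sinh((m+1)\eta)$ satisfies the continued-fraction recursion
\[
  t_m=\frac{1}{2\cosh\eta-t_{m-1}}.
\]
With this, the excess $\delta_m(r)$ should propagate through the induction with a single scalar inequality to check at each step.
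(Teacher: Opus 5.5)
Your reduction matches the paper's. By the moment reduction in the proof of Theorem~\ref{th:phase_score_bound}, it suffices to show $2\mathbb{E}[c_r(J)]\le m+r+\delta_m(r)$ for every symmetric probability vector $(\pi_0,\dots,\pi_L)$. The recurrence you anticipate, with $\cosh\eta_r=4r/(4r-1)$, is also correct: its symmetric solution $\pi_j\propto\cosh((L/2-j)\eta_r)$ is exactly the paper's extremizer.

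\textbf{The gap.} The proposal stops at the step that constitutes the proof. You assume a maximizer with full support and do not evaluate $Q$ at the stationary point. The global-optimality argument is only announced: no vector $f$ or semidefinite correction is given, no entrywise inequality is checked, and the scalar inequality your induction on $m$ would propagate is never identified. As written, nothing excludes a larger value of $Q$ at another symmetric $\pi$, so the theorem is not established.

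Your description of the second differences is also inverted. Since $|j+1-k|-2|j-k|+|j-1-k|=2\mathbf{1}[j=k]$, the linear part is what produces the $2\pi_j$ term of the recurrence. It is the parity part that disappears, and only because $\pi$ is symmetric.

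\textbf{The missing idea.} No certificate or induction is needed, because the quadratic form is concave on the set that matters. Since $L=2m+1$ is odd, $j\mapsto L-j$ swaps parity, so every symmetric $\pi$ puts mass exactly $1/2$ on the even and on the odd indices. Write
\[
C_{jk}=|j-k|-1+2r+2r\,\mathbf{1}[j\equiv k \pmod 2]-(4r-1)\,\mathbf{1}[j=k].
\]
On the symmetric slice the parity term is then constant, and
\[
Q(\pi)=2\sum_{i=0}^{L-1}\Phi_i(1-\Phi_i)-1+3r-(4r-1)\sum_j\pi_j^2,\qquad \Phi_i=\sum_{j\le i}\pi_j .
\]
This is strictly concave for $r>1/4$. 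Hence your strictly positive stationary point, at which $C\pi$ is constant, is the unique global maximum, and full support need not be assumed.

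The paper makes the same structure explicit through the folded variable $X=\min\{Z,L-Z\}$. Given $X=x$ and $X'=x'$, $J$ equals $|x-x'|$ or $L-x-x'$ with probability $1/2$ each. This gives
\[
2\mathbb{E}[c_r(J)]=m+r+\tfrac{4r-1}{2}\Pr(X\neq X')-\mathbb{E}[\min(X,X')].
\]
In the tail variables $u_j=\Pr(X\ge j)$ this equals $m+r+F_m(\mathbf{u})/2$, where $F_m$ is a strictly concave quadratic. Its unconstrained maximizer $u_j^*=\sinh((m+1-j)\eta_r)/\sinh((m+1)\eta_r)$ is decreasing, hence admissible. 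Multiplying the stationarity equation by $u_j^*$ and summing gives $F_m(\mathbf{u}^*)=(4r-1)u_1^*=2\delta_m(r)$. This is the value your sketch only asserts when it says the substitution ``should give'' it.
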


\begin{proof} For $0<r\leq1/4$, eq.~\eqref{eq:phase_depth_score_gap} follows directly from Theorem~\ref{th:phase_score_bound}.

Take $r>1/4$ fixed and denote $a=4r-1>0$. Let $A(\theta)$ be the failure amplitude of the generalized search algorithm and let $\pi_j$ be its coefficients in eq.~\eqref{eq:positive_fourier_amplitude}, with $L=2m+1$ and $\pi_j=\pi_{L-j}$. Let $Z$ and $Z'$ be independent random variables satisfying $\Pr(Z=j)=\Pr(Z'=j)=\pi_j,$ and $J=|Z-Z'|$. Equation~\eqref{eq:phase_moment_reduction} gives $\norm{\mathbf{s}(m,w,\mathbf{p}^{(k)})}=2\mathbb{E}[g_J]\leq2S^*_{{Grover}}\mathbb{E}[c_r(J)]$.
Define $X=\min\{Z,L-Z\}$ and $X'=\min\{Z',L-Z'\}$. Since $L=2m+1$, both variables take values in $\{0,\ldots,m\}$. For every $x$ in the support of $X$, the symmetry $\pi_x=\pi_{L-x}$ gives
\begin{equation}\label{eq:phase_conditional_Z}
  \Pr(Z=z\mid X=x)=
  \begin{cases}
    1/2, & z=x,\\
    1/2, & z=L-x,\\
    0, & \text{otherwise}.
  \end{cases}
\end{equation}
Applying eq.~\eqref{eq:phase_conditional_Z} independently to $Z$ and $Z'$ gives
\begin{equation}\label{eq:phase_conditional_J}
  \Pr(J=j\mid X=x,X'=x')=
  \begin{cases}
    1/2, & j=|x-x'|,\\
    1/2, & j=L-x-x',\\
    0, & \text{otherwise}.
  \end{cases}
\end{equation}
Substituting the two possible values of $J$ from eq.~\eqref{eq:phase_conditional_J} into eq.~\eqref{eq:cr_definition}, with $a=4r-1$, gives
\begin{equation}\label{eq:phase_conditional_cr}
  2\mathbb{E}[c_r(J)\mid X=x,X'=x']
  =
  \begin{cases}
    m+r-x, & x=x',\\
    m+3r-\dfrac{1}{2}-\min(x,x'), & x\neq x'.
  \end{cases}
\end{equation}

Define the tail probabilities $u_j=\Pr(X\geq j)$, with $u_0=1$ and $u_{m+1}=0$. Since $\Pr(X=j)=u_j-u_{j+1}$ and $X,X'$ are independent and identically distributed, $\Pr(X\neq X')=1-\sum_{j=0}^{m}(u_j-u_{j+1})^2$ and $\mathbb{E}[\min(X,X')]=\sum_{j=1}^{m}\Pr(X\geq j,X'\geq j)=\sum_{j=1}^{m}u_j^2$.
Averaging eq.~\eqref{eq:phase_conditional_cr} therefore gives
\begin{equation}\label{eq:phase_tail_reduction}
  2\mathbb{E}[c_r(J)]
  =m+r+\frac{F_m(\mathbf{u})}{2},
\end{equation}
where
\begin{equation}\label{eq:phase_tail_quadratic}
  F_m(\mathbf{u})
  =a-a\sum_{j=0}^{m}(u_j-u_{j+1})^2
  -2\sum_{j=1}^{m}u_j^2.
\end{equation}

To bound this expression, we first maximize $F_m$ over all real values of $u_1,\ldots,u_m$ with the endpoints fixed. Since $a>0$, $F_m$ is a strictly concave quadratic function. For $j=1,\ldots,m$, its stationary equation is
\begin{equation}\label{eq:phase_tail_recurrence}
  a(u_{j-1}-2u_j+u_{j+1})-2u_j=0.
\end{equation}
Since $\cosh\eta_r=(a+1)/a$, eq.~\eqref{eq:phase_tail_recurrence} is equivalent to $u_{j-1}-2\cosh(\eta_r)u_j+u_{j+1}=0$.
Together with $u_0=1$ and $u_{m+1}=0$, its unique solution is
\begin{equation}\label{eq:phase_tail_optimizer}
  u_j^*
  =\frac{\sinh((m+1-j)\eta_r)}
         {\sinh((m+1)\eta_r)}.
\end{equation}
The sequence in eq.~\eqref{eq:phase_tail_optimizer} decreases from $u_0^*=1$ to $u_{m+1}^*=0$. It is therefore a valid sequence of tail probabilities, so the maximum of the relaxed problem is also attained in the admissible domain.

For completeness, the value of the maximum follows directly from the recurrence. Multiplying eq.~\eqref{eq:phase_tail_recurrence} by $u_j^*$, summing over $j=1,\ldots,m$, and using the two boundary values gives
\begin{equation*}
  a\sum_{j=0}^{m}(u_j^*-u_{j+1}^*)^2
  +2\sum_{j=1}^{m}(u_j^*)^2
  =a(1-u_1^*).
\end{equation*}
It follows from eq.~\eqref{eq:phase_tail_quadratic} that
\begin{equation*}
  F_m(\mathbf{u})
  \leq F_m(\mathbf{u}^*)
  =au_1^*
  =a\frac{\sinh(m\eta_r)}
          {\sinh((m+1)\eta_r)}
  =2\delta_m(r).
\end{equation*}
Combining this result with eqs.~\eqref{eq:phase_moment_reduction} and~\eqref{eq:phase_tail_reduction} gives
\begin{equation*}
  \norm{\mathbf{s}(m,w,\mathbf{p}^{(k)})}
  \leq
  S^*_{{Grover}}\left[m+r+\delta_m(r)\right].
\end{equation*}
Dividing by $m+r$ and using the score definition in eq.~\eqref{eq:score} proves eq.~\eqref{eq:phase_depth_score_gap}.
\end{proof}

From Theorem~\ref{th:phase_score_gap}, we conclude that the relative score excess of a YLC searching instance over Grover's best score satisfies $(S(m,w,r\mid\mathbf{p}^{(k)})-S^*_{{Grover}})/S^*_{{Grover}}\leq \frac{\delta_m(r)}{m+r}$, with $\delta_m(r)$ given by eq.~\eqref{eq:phase_depth_excess}. This upper bound decreases with $m$, so, regardless of the value of $r$, the possible improvement that can be obtained by adding the parameter $w$ is bounded more tightly at higher search depths. For $0<r\leq1/4$, this improvement is always 0. This proves that the upper bound on the advantage of fixed-point quantum search is larger at smaller search depths.

If we take into account that a generalized quantum amplitude amplification iteration $\mathbf{G}(\alpha, \beta)$ can be $\chi\geq1$ times more expensive to implement than the original $\mathbf{G}(\pi, \pi)$ iteration, we can also quantify the necessary upper limit on the cost ratio, $\chi_{c}$, for fixed-point search to provide an advantage. If we define the new score taking into account this correction as $S'(m,w,r,\chi\mid\mathbf{p}^{(k)})=\frac{m+r}{\chi m+r} S(m,w,r\mid\mathbf{p}^{(k)})$, from Theorem~\ref{th:phase_score_gap} we have that $S(m,w,r\mid\mathbf{p}^{(k)})\leq S^*_{{Grover}}(1+\frac{\delta_m(r)}{m+r})$, so $S'(m,w,r,\chi\mid\mathbf{p}^{(k)})>S^*_{{Grover}}$ implies
\begin{equation}
  \chi<\chi_{c}(r,m)
  :=1+\frac{\delta_m(r)}{m}.
\end{equation} 

Table~\ref{tab:phase_cost_score_bounds} gives the critical cost ratio and the corresponding relative score-gap bound for selected YLC depths $m$. The quantity $100(\chi_{c}-1)\%$ is the maximum percentage by which one YLC iteration may be slower than one standard Grover iteration while an improvement remains possible. The final column gives the upper bound on $100\Delta S_m/S^*_{{Grover}}$ from Theorem~\ref{th:phase_score_gap}.

\begin{table}[!h]
  \centering
  \caption{Depth-dependent YLC-iteration cost and relative score-gap bounds for a YLC searching instance of depth $m$.}
  \label{tab:phase_cost_score_bounds}
  \begin{tabular}{|c|c|c|c|c|}
    \hline
    $r$ & $m$ & $\chi_{c}(r,m)$ & Maximum slowdown & $\Delta S_m/S^*_{{Grover}}$ \\
    & & & $(\%)$ & upper bound $(\%)$ \\
    \hline
    $1.00$ & $1$  & $1.5625$ & $56.25$ & $28.13$ \\
    $1.00$ & $2$  & $1.3273$ & $32.73$ & $21.82$ \\
    $1.00$ & $5$  & $1.1354$ & $13.54$ & $11.28$ \\
    $1.00$ & $10$ & $1.0677$ & $6.77$  & $6.16$ \\
    \hline
    $0.75$ & $1$  & $1.3333$ & $33.33$ & $19.05$ \\
    $0.75$ & $2$  & $1.1875$ & $18.75$ & $13.64$ \\
    $0.75$ & $5$  & $1.0764$ & $7.64$  & $6.64$ \\
    $0.75$ & $10$ & $1.0382$ & $3.82$  & $3.55$ \\
    \hline
    $0.50$ & $1$  & $1.1250$ & $12.50$ & $8.33$ \\
    $0.50$ & $2$  & $1.0667$ & $6.67$  & $5.33$ \\
    $0.50$ & $5$  & $1.0268$ & $2.68$  & $2.44$ \\
    $0.50$ & $10$ & $1.0134$ & $1.34$  & $1.28$ \\
    \hline
    $0.25$ & $1$  & $1.0000$ & $0.00$ & $0.00$ \\
    $0.25$ & $2$  & $1.0000$ & $0.00$ & $0.00$ \\
    $0.25$ & $5$  & $1.0000$ & $0.00$ & $0.00$ \\
    $0.25$ & $10$ & $1.0000$ & $0.00$ & $0.00$ \\
    \hline
  \end{tabular}
\end{table}




\subsection{Classical-to-quantum score transition for large databases}\label{subsection:classical_quantum_transition}

In the large-database approximation, if the remaining database size after $k$ searching instances satisfies $N_k\sim N\gg 1$, the score can be approximated by
\begin{equation}
  S(m,w,r\mid\mathbf{p}^{(k)})
  \approx\frac{\displaystyle\int_0^1 P(\lambda,m,w)\,p^{(k)}(\lambda)\,{d}\lambda}{m+r},
\end{equation}
where $p_i^{(k)}\approx p^{(k)}(i/N_k)/N_k$, with $p^{(k)}(\lambda)$ the continuum posterior density corresponding to $\mathbf{p}^{(k)}$. In this limit, the change in the solution fraction caused by removing the tested candidates can be neglected.

Suppose that the first $k$ searching instances are classical and that $s\in\{0,\ldots,k\}$ return solutions. In the continuum approximation, let $o$ denote the outcome of the next classical searching instance, with $o=1$ if a solution is found and $o=0$ otherwise. The Bayesian update in eq.~\eqref{eq:bayes_update} then reduces to
\begin{equation}\label{eq:classical_continuum_update}
  p^{(k+1)}(\lambda\mid o)
  =\begin{cases}
    \dfrac{\lambda p^{(k)}(\lambda)}
    {\displaystyle\int_0^1 u p^{(k)}(u)\,{d}u},
      & \text{if }o=1,\\[6pt]
    \dfrac{(1-\lambda)p^{(k)}(\lambda)}
    {\displaystyle\int_0^1 (1-u)p^{(k)}(u)\,{d}u},
      & \text{if }o=0.
  \end{cases}
\end{equation}
Starting from the uniform prior $p^{(0)}(\lambda)=1$, repeated application of eq.~\eqref{eq:classical_continuum_update} for $s$ solutions and $k-s$ nonsolutions gives the normalized posterior density
\begin{equation}\label{eq:classical_beta_posterior}
  p^{(k)}(\lambda\mid s)
  =\frac{\lambda^s(1-\lambda)^{k-s}}
  {\mathcal{B}(s+1,k-s+1)},
\end{equation}
where $\mathcal{B}(a,b)=\int_0^1 u^{a-1}(1-u)^{b-1}\,{d}u$ is the beta function. Its mean is $(s+1)/(k+2)$, and the score of a classical searching instance is therefore
$S(0,0,r\mid\mathbf{p}^{(k)})=(s+1)/[r(k+2)]$. In this approximation, the posterior-predictive success probability is:
\begin{equation}\label{eq:beta-posterior-predictive}
\norm{\mathbf{s}(m,w,\mathbf{p}^{(k)})}=\int_0^1 P(\lambda,m,w)p^{(k)}(\lambda\mid s)\,{d}\lambda. 
\end{equation}

Let $k_{\mathrm{tr}}(r,s)$ be the smallest integer $k\geq s$ for which a quantum searching instance has a score at least as large as the classical score. This definition includes equality; a policy that retains the classical instance at a score tie instead uses a strict inequality when selecting its first quantum instance. We assume that the generalized Grover iterators require no extra execution time, $\chi=1$. From the score definition in eq.~\eqref{eq:score} and the classical score following eq.~\eqref{eq:classical_beta_posterior}, a quantum searching instance with $m\geq1$ reaches the classical score exactly when
\begin{equation*}
\norm{\mathbf{s}(m,w,\mathbf{p}^{(k)})}
\geq\frac{(m+r)(s+1)}{r(k+2)}.
\end{equation*}

For $m=1$, eq.~\eqref{eq:ylc_chebyshev} takes the form
\begin{equation*}
P(\lambda,1,w)
=\lambda+\frac{8\lambda(1-\lambda)}{1+3w^2}
-\frac{16\lambda^2(1-\lambda)}{(1+3w^2)^2}.
\end{equation*}

The posterior-predictive success probability given by eq.~\eqref{eq:beta-posterior-predictive} can be written in this case as
\begin{equation*}
\begin{split}
\norm{\mathbf{s}(1,w,\mathbf{p}^{(k)})}
={}&\frac{s+1}{k+2}
+\frac{8(s+1)(k-s+1)}{(k+2)(k+3)(1+3w^2)}\\
&-\frac{16(s+1)(s+2)(k-s+1)}{(k+2)(k+3)(k+4)(1+3w^2)^2}.
\end{split}
\end{equation*}

This equation is maximized at 
\begin{equation}\label{eq:depth_one_optimal_width}
w^2=
\begin{cases}
\dfrac{4s+4-k}{3(k+4)}, & k<4s+4,\\[6pt]
0, & k\geq4s+4.
\end{cases}
\end{equation}

Equating the optimized depth-one score with the classical score gives the critical testing ratio
\begin{equation}\label{eq:depth_one_critical_ratio}
r_{{crit}}^{(1)}(k,s)
=\begin{cases}
\dfrac{(s+2)(k+3)}{(k-s+1)(k+4)},
& s\leq k<4s+4,\\[8pt]
\dfrac{(k+3)(k+4)}{8(k-s+1)(k-2s)},
& k\geq4s+4.
\end{cases}
\end{equation}

For $m\geq2$, the probability bound gives $S(m,w,r\mid\mathbf{p}^{(k)})\leq1/(m+r)\leq1/(2+r)$. Moreover, a finite-depth searching instance has posterior-predictive success probability strictly less than one under the Beta posterior. A necessary condition for such an instance to reach the classical score is therefore $r(k-s+1)>2(s+1)$. Consequently, the critical testing ratio for any depth $m\geq2$ satisfies
\begin{equation}\label{eq:depth_two_critical_ratio}
  r_{\mathrm{crit}}^{(\geq2)}(k,s)>\frac{2(s+1)}{k-s+1}.
\end{equation}

The two branches of eq.~\eqref{eq:depth_one_critical_ratio} satisfy
\begin{equation*}
  r_{\mathrm{crit}}^{(1)}(k,s)
  <\max\left\{\frac14,\frac{2(s+1)}{k-s+1}\right\},
  \qquad k\geq s.
\end{equation*}
For $r>1/4$, if the optimized depth-one score is still below the classical score, then $r<r_{\mathrm{crit}}^{(1)}(k,s)$ and this inequality forces $r<2(s+1)/(k-s+1)$. Equation~\eqref{eq:depth_two_critical_ratio} then excludes every larger depth. Thus, no larger depth reaches the classical score before depth one, and the first quantum score crossing is the depth-one crossing in the range $1/4<r\leq1$.

By solving eq.\eqref{eq:depth_one_critical_ratio} for $k$, we obtain on two branches the value of $k_{tr}(s,r)$ as:
\begin{equation}\label{eq:transition_positive_width_branch}
  k_{tr}^+(s,r)=\left\lceil s-1+
\frac{s+2-r(s+3)+\sqrt{\left[s+2-r(s+3)\right]^2+4r(s+2)^2}}{2r}\right\rceil,
\end{equation}
when $\dfrac{4s+7}{4(3s+5)}\leq r\leq1$, respectively
\begin{equation}\label{eq:transition_zero_width_branch}
\begin{aligned}
k_{tr}^-(s,r)
={}&\Biggl\lceil
s-1+\frac{8r(s+1)+2s+5}{2(8r-1)}
\\
&\qquad
+\frac{
\sqrt{\left[8r(s+1)+2s+5\right]^2
+4(8r-1)(s+2)(s+3)}
}{2(8r-1)}
\Biggr\rceil .
\end{aligned}
\end{equation}
when $\dfrac14<r<\dfrac{4s+7}{4(3s+5)}$. The superscripts distinguish the two analytic branches before integer rounding: eq.~\eqref{eq:transition_positive_width_branch} can also give $k_{\mathrm{tr}}=4s+4$, where the optimized width is zero. Both formulas include score ties. For a strict depth-one crossing, the ceiling of the real root is replaced by its floor plus one; the result differs only when the root is an integer. In the particular case where no solutions have been found, $s=0$, the transition points are summarized in Table~\ref{tab:transition-s0}.

\begin{table}[htbp]
\centering
\caption{Classical-to-quantum transition points for $s=0$.}
\label{tab:transition-s0}
\setlength{\tabcolsep}{5pt}
\begin{tabular}{|c|*{15}{c}|}
\hline
$r$
& $1/4$ && $15/56$ && $3/10$
&& $7/20$ && $3/7$ && $5/9$
&& $4/5$ && $1$ \\
\hline
$k_{\mathrm{tr}}(r,0)$
& & $7$ & $\big|$ & $6$ & $\big|$ & $5$
& $\big|$ & $4$ & $\big|$ & $3$ & $\big|$ & $2$
& $\big|$ & $1$ & \\
\hline
\end{tabular}
\end{table}

In the case $r\leq 1/4$, Theorem~\ref{th:phase_score_bound} reduces the complexity of finding the transition condition only to the case of Grover searching instances. Then, the transition point $k_{tr}(r,s)$ is the smallest integer $k\geq s$ such that
\begin{equation}\label{eq:classical_quantum_transition_condition}
\max_{m\in\mathbb{Z}_{\geq1}}
\frac{1}{m+r}\int_0^1 \sin^2\!\left((2m+1)\arcsin\sqrt{\lambda}\right) p^{(k)}(\lambda\mid s){d}\lambda
\geq\frac{s+1}{r(k+2)},
\end{equation}
where $p^{(k)}(\lambda\mid s)$ is the continuous posterior in eq.~\eqref{eq:classical_beta_posterior}. We determine its asymptotic behavior for $r\to 0^+$ in Theorem~\ref{th:classical_quantum_transition}, as $k_{{tr}}(r,s)=\frac{C_s^2}{r^2}-\frac{2B_s}{r}+\mathcal{O}_s(1)$, where $C_s=\frac{(s+1)x_s}{2q_s(x_s)}$, $B_s=\frac{s+1}{2q_s(x_s)}$, and
\begin{equation}\label{eq:limiting_predictive_success}
  q_s(x)=\frac{1}{s!}\int_0^\infty u^s e^{-u}\sin^2(x\sqrt{u})\,\text{d}u
\end{equation}
 with $x_s$ the unique positive global maximum of $q_s(x)/x$.

\begin{theorem}\label{th:classical_quantum_transition}
Within the continuous posterior approximation in eq.~\eqref{eq:classical_beta_posterior}, for fixed $s$ and $r\to0^+$,
\begin{equation}\label{eq:classical_quantum_transition}
k_{{tr}}(r,s)
=\frac{C_s^2}{r^2}
-\frac{2B_s}{r}
+\mathcal{O}_s(1).
\end{equation}
\end{theorem}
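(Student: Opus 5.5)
The plan is to rescale the Beta posterior in eq.~\eqref{eq:classical_beta_posterior} to its Gamma limit. Then the transition condition in eq.~\eqref{eq:classical_quantum_transition_condition}, which is the relevant one since Theorem~\ref{th:phase_score_bound} lets us restrict to Grover instances for $r\le 1/4$, becomes a quadratic inequality in $t=\sqrt{k}$. Solving it to two orders gives the stated expansion.

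\emph{Setup and expected scaling.} Because the classical score $(s+1)/[r(k+2)]$ blows up as $r\to0^+$, the transition must happen at $k\to\infty$. I will write $\lambda=u/k$. Under eq.~\eqref{eq:classical_beta_posterior}, $u$ converges to a $\mathrm{Gamma}(s+1,1)$ variable. The corrections are $\mathcal{O}_s(1/k)$ for integrals of bounded smooth test functions: they come from $(1-u/k)^{k-s}=e^{-u}(1+\mathcal{O}((u+u^2)/k))$ and from the normalization of the Beta function.

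\emph{Predictive success probability.} For a Grover depth $m$, I set $x=(2m+1)/\sqrt{k}$. Then $(2m+1)\arcsin\sqrt{u/k}=x\sqrt{u}\,(1+\mathcal{O}(u/k))$, so the posterior-predictive success probability equals $q_s(x)+\mathcal{O}_s((1+x^2)/k)$, with $q_s$ as in eq.~\eqref{eq:limiting_predictive_success}. The cost is $m+r=(x\sqrt{k}-1)/2+r$. Multiplying through, the crossing condition reads
\[
  r\,q_s(x)\,(t^2+2)\;\ge\;\tfrac{s+1}{2}\,(x t-1+2r)+\mathcal{O}_s(r),\qquad t=\sqrt{k}.
\]
Large depths with $x\gg1$ cannot win, because $q_s\le1$ while the cost grows, so $x$ may be confined to a compact set. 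At leading order, dividing by $t^2$, the condition becomes $q_s(x)/x\ge (s+1)/(2rt)$. The first admissible $t$ is therefore obtained at $x=x_s$, which gives $t\approx C_s/r$.

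\emph{Second order.} Next I fix $x$ near $x_s$ and solve the quadratic $r q_s(x) t^2-\tfrac{s+1}{2}x t+\tfrac{s+1}{2}=\mathcal{O}_s(r)$ for its larger root. This yields $t=\frac{(s+1)x}{2rq_s(x)}-\frac1x+\mathcal{O}_s(r)$. At $x=x_s$ this is $t=C_s/r-B_s/C_s+\mathcal{O}_s(r)$, using $B_s/C_s=1/x_s$. Squaring then gives $k=C_s^2/r^2-2B_s/r+\mathcal{O}_s(1)$, and taking the ceiling only adds $\mathcal{O}(1)$.

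\emph{Main obstacle: the integer depth.} The step I expect to be hardest is handling the integrality of $m$. The admissible $x$ values form a grid of spacing $2/\sqrt{k}\asymp r$, so $x_s$ is generally not attained exactly. I plan to use that $x_s$ is a nondegenerate maximum of $q_s(x)/x$, which I will need to check or assume, e.g.\ through the second derivative of the explicit integral. Then the nearest grid point loses only $\mathcal{O}(r^2)$ in $q_s/x$, which shifts $t$ by $\mathcal{O}(r^2)\cdot t^2/C_s=\mathcal{O}(1)$. That perturbs $k=t^2$ by $\mathcal{O}(t)=\mathcal{O}(1/r)$, which is too crude.

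So I will instead treat the $x$-dependent terms jointly. Writing $\phi(x)=q_s(x)/x$, the condition is $rt\,\phi(x)\ge\frac{s+1}{2}(1-1/(xt))+\mathcal{O}(r^2)$. Grid suboptimality enters as $rt\cdot\mathcal{O}(r^2)=\mathcal{O}(r^2)$ on the left. Since $\partial_t$ of the left side is of order $r$, this changes $t$ by only $\mathcal{O}(r)$, and hence $k$ by $\mathcal{O}(1)$.

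Making this bookkeeping uniform is the delicate part. It requires uniform error bounds for the Beta-to-Gamma approximation over $x$ in a compact set, and a separate crude argument excluding depths with $x\to0$ (where $q_s(x)/x\sim x(s+1)\to0$) and with $x\to\infty$.
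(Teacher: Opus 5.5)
Your proposal is correct and follows the paper's route: Gamma rescaling of the Beta posterior, the limiting predictive success $q_s(x)$, the leading order from the maximum of $q_s(x)/x$ at $x_s$, and a second-order inversion. Solving the quadratic in $\sqrt{k}$ is the same computation as the paper's inversion of $r_{\mathrm{cr}}(k,s)$, and your explicit handling of the integer depth, the compact $x$-range and the nondegeneracy of $x_s$ makes rigorous what the paper's step ``the leading term is minimized at $x=x_s$'' leaves implicit; two small corrections are that your first grid estimate drops a factor of $r$ (the grid loss alone already shifts $t$ by only $\mathcal{O}(r)$), and that nondegeneracy is really needed to keep the $x$-dependent $\mathcal{O}(r)$ term $\tfrac{s+1}{2xt}$ from raising the maximum over $x$ by more than $\mathcal{O}(r^2)$.
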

\begin{proof}
Using the variable change $u=(k+2)\lambda$ in the left hand side of eq.~\eqref{eq:classical_quantum_transition_condition} gives
\begin{equation*}
\begin{split}
\norm{\mathbf{s}(m,0,\mathbf{p}^{(k)})}
={}&\frac{1}{(k+2)^{s+1}B(s+1,k-s+1)}
\int_0^{k+2}u^s\left(1-\frac{u}{k+2}\right)^{k-s}\\
&\times\sin^2\!\left((2m+1)\arcsin\sqrt{\frac{u}{k+2}}\right)\,{d}u.
\end{split}
\end{equation*}

For fixed $s$, we have that $(1-\frac{u}{k+2})^{k-s}=e^{-u}(1+\mathcal{O}_s(\frac{1}{k+2}))$. Since $P(\lambda,m,0)\leq\min\{1,(2m+1)^2\lambda\}$, a strict improvement over the classical score requires $\frac{1}{4r}-1<m<r\frac{k-s+1}{s+1}$. Then, we can write $(2m+1)\arcsin\sqrt{\frac{u}{k+2}}=\frac{2m+1}{\sqrt{k+2}}\sqrt{u}+\mathcal{O}(\frac{1}{k+2})$ and express the posterior-predictive success probability as:
\begin{equation}\label{eq:predictive_success_asymptotic}
\norm{\mathbf{s}(m,0,\mathbf{p}^{(k)})}
=q_s\!\left(\frac{2m+1}{\sqrt{k+2}}\right)
+\mathcal{O}_s\!\left((k+2)^{-1}\right).
\end{equation}

Equality of the classical and Grover scores gives
\begin{equation*}
r=\frac{(s+1)m}
{(k+2)\norm{\mathbf{s}(m,0,\mathbf{p}^{(k)})}-(s+1)},
\end{equation*}
and, defining $x=(2m+1)/\sqrt{k+2}$ and substituting eq.~\eqref{eq:predictive_success_asymptotic}, the transition value of $r$ from a classical to a quantum searching instance at fixed $k$ is
\begin{equation*}
r_{cr}(k,s)
=\min_{m\in\mathbb{Z}_{\geq1}}
\left\{
\frac{(s+1)x}{2q_s(x)\sqrt{k+2}}
-\frac{s+1}{2q_s(x)(k+2)}
+\mathcal{O}_s\!\left((k+2)^{-3/2}\right)
\right\}.
\end{equation*}
As the leading term is minimized at $x=x_s$,
\begin{equation*}
r_{cr}(k,s)
=\frac{C_s}{\sqrt{k+2}}-\frac{B_s}{k+2}
+\mathcal{O}_s\!\left((k+2)^{-3/2}\right).
\end{equation*}
At the first crossing, inversion at the two adjacent integer values of $k$ gives
\begin{equation*}
\sqrt{k_{{tr}}(r,s)+2}
=\frac{C_s}{r}-\frac{B_s}{C_s}+\mathcal{O}_s(r).
\end{equation*}
Squaring proves eq.~\eqref{eq:classical_quantum_transition}.
\end{proof}

For the case of no solutions found, $s=0$, eq.~\eqref{eq:limiting_predictive_success} reduces to
$q_0(x)=xe^{-x^2}\int_0^x e^{t^2}\,{d}t$. The ratio $q_0(x)/x$ has a unique positive maximum at $x_0$, characterized by
$2x_0e^{-x_0^2}\int_0^{x_0}e^{t^2}\,{d}t=1$. Therefore, $q_0(x_0)=1/2$, $C_0=x_0$, and $B_0=1$, and eq.~\eqref{eq:classical_quantum_transition} becomes
\begin{equation}\label{eq:classical_quantum_transition_s0}
k_{{tr}}(r,0)+2
=\frac{x_0^2}{r^2}-\frac{2}{r}+\mathcal{O}(1)
\end{equation}
For each value of $s$ in Table~\ref{tab:classical_quantum_coefficients}, numerical maximization gives a unique global maximum of $q_s(x)/x$.
\begin{table}[!ht]
\centering
\caption{Numerical maximizers and asymptotic coefficients in eq.~\eqref{eq:classical_quantum_transition}.}
\label{tab:classical_quantum_coefficients}
\begin{tabular}{cccc}
\toprule
$s$ & $x_s$ & $C_s^2$ & $2B_s$ \\
\midrule
$0$ & $0.924139$ & $0.854033$ & $2.000000$ \\
$1$ & $0.722993$ & $1.332033$ & $3.192663$ \\
$2$ & $0.614073$ & $1.809024$ & $4.380586$ \\
$3$ & $0.543159$ & $2.285651$ & $5.566833$ \\
$4$ & $0.492265$ & $2.762105$ & $6.752294$ \\
\bottomrule
\end{tabular}
\end{table}

\section{Numerical simulation analysis}
In this section we present the numerical setup used for simulating the performance of the adaptive bayesian search algorithm with the fixed-point quantum search generalization and compare it with the performance of the one using Grover's unmodified search. We used two different numerical approaches for simulating artificial data, both based on the version of the algorithm starting with the uniform posterior distribution. The first approach bulilds a decision tree, analysing all possible outcomes after 20 measurements, while the second approach is a Monte-Carlo simulation for analysing the performance of the full searching algorithms presented. The numerical results are also confirming the results of the theoretical analysis in Section~\ref{sec:theory}.

\subsection{Numerical setup and simulation methods}\label{subsec:numerical_setup}

We performed the calculations on classical hardware using the analytic success probabilities in the ideal oracle model. We did not simulate gate-level circuits, quantum state vectors, device noise, or quantum hardware. Each search instance generated a Bernoulli observation conditioned on the true remaining solution fraction, with the corresponding success probability as its parameter. Given the permutation symmetry assumed for the unstructured database, this simulation implements the procedure of Section \ref{sec:algorithm}.

A reference implementation ran on the CPU with NumPy and SciPy using double precision arithmetic. The YLC Chebyshev ratio was evaluated in the log domain to prevent overflow, with endpoints handled explicitly, with the exact Grover limit at $w=0$. Posteriors were normalized by log-sum-exp. For width optimization, the reference code searched $q=w^2$ over both linear and logarithmic coordinates with DIRECT, then applied bounded scalar refinement. Endpoints were included in the search, and the boundary derivative was checked.

We complemented the reference CPU implementation with a CUDA-accelerated implementation, in order to make large simulation sizes tractable. We used a single NVIDIA H100 GPU to perform three large-scale simulation campaigns. Each of the three H100 studies started from a width grid of 17 points, with a maximum grid budget of 257 and a depth cap of 4096. The grid budget was distinct from the continuous refinement stage. The parameter sweep analysis used the same width bounds and score and width tolerances as the tree studies described below. While scanning depths, the bound \(1/(m+r)\) could rule out deeper actions once they could no longer beat the current score. Strict optimizer mode was on for the sweep and the continuum tree, and off for the finite tree. Later verification confirmed optimizer convergence for all retained finite-tree and continuum beliefs.

CUDA acceleration relied on CuPy with fused double-precision kernels. When posterior states shared a support, a single table of success probabilities over support, depth, and width was computed, and dense matrix multiplication produced the posterior expectations for every state in the batch. The continuum calculation followed the same route on its fixed quadrature nodes. For the sweep, which handles arbitrary supports, the computation proceeded in depth blocks using fused multiplication and reduction kernels. Probability tables were reused where compatible, and large arrays were split into chunks to bound memory usage. These procedures were done in order to accelerate the computation for the datapoints. In all CUDA-accelerated studies, continuous width refinement was limited to the 16 depths with the largest grid scores, while the remaining depths kept their grid values. On the validation workloads this choice agreed with refining every depth, although the comparison is not a proof of global optimality. For concentrated posteriors, GPU reductions used a contiguous support band selected by a weight threshold of \(10^{-22}\) times the row maximum. This restriction applied only to score evaluation and was independent of posterior compression: the sweep used an atom budget of 4097, so the full count support could be retained.

Within all experiments, Bayesian updates, belief merging, exact recursions, and the small workloads stayed on the CPU, as it is a more appropriate medium for such computations. For action selection, large batches moved between host and device memory. Cases falling outside the accelerated calculation's validity conditions fell back to the scalar implementation. Random streams were keyed to trial identities, so worker scheduling had no effect on their assignment. On smaller cases that exercised nontrivial policy choices, we have cross-validated results from the GPU with those of the reference CPU implementation, by comparing probabilities, selected actions, and aggregate quantities. Bit-identical results were not required, as the calculation paths are different between the implementations.

Each H100 study used a single NVIDIA H100 80 GB HBM3 GPU. The sweep ran with 12 CPU worker processes, the finite tree with two, and the continuum tree with one. All three runs were on Linux x86-64 with kernel 6.18.9-1.el10.elrepo.x86\_64 and glibc 2.39, under CPython 3.13.11, NumPy 2.4.6, SciPy 1.17.1, CuPy 14.1.1, and CUDA runtime 13.2. These platform details pertain to the H100 source datasets with the posterior calculations described in the transition subsection constituting a separate set of computations.

\subsection{Finite and continuum-tree analysis}\label{subsec:tree_analysis}

For validating the results of Theorems~\ref{th:phase_score_bound} and~\ref{th:phase_score_gap} in real full searching algorithms, we build a decision tree of all possible outcomes after 20 measurements. We first construct the tree for a finite database of size $N=16384$ starting from the uniform prior $\mathbf{p}^{(0)}$ with $p^{(0)}_i=\frac{1}{N+1}$, $\forall i\in\{0, 1, \ldots, N\}$. Then, we choose the optimal parameters $(m^*, w^*)$ that maximize the score for this posterior distribution for the given $r$ value. After that, we update the posterior distribution for both possible outcomes of the measurement and repeat the process until we reach the maximum depth of 20. The tree is constructed for each testing ratio $r\in\{0.20,0.25,\ldots,1\}$. 

\begin{figure}[!h]
  \centering
  \includegraphics[width=\textwidth]{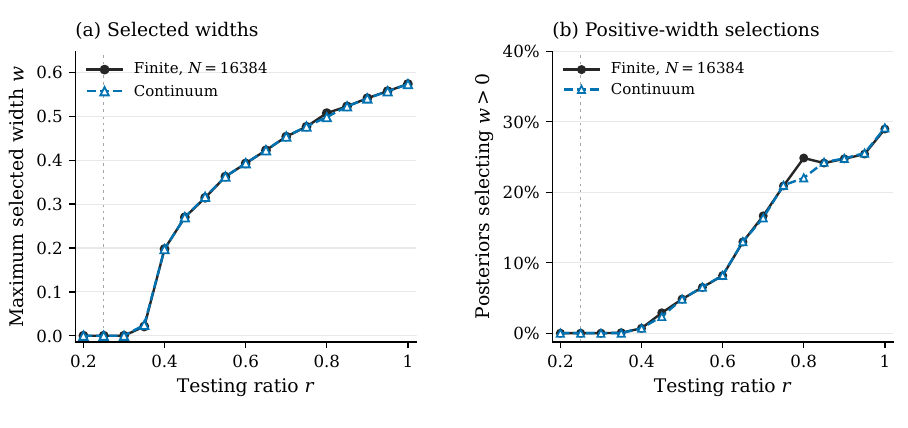}
  \caption{Selected widths in the finite $N=16384$ and continuum trees through level $20$, as a function of $r$: (a) the maximum selected width at each ratio; (b) the percentage of retained posterior records selecting $w>0$. Dashed vertical lines mark $r=1/4$.}
  \label{fig:finite_continuum_reachable_tree_widths}
\end{figure}

Additionally, we create another decision tree for the continuum limit $N\to \infty$, where the posterior distributions used are now continuous and are updated neglecting the finite-population removal step. In the continuous case, we describe the uncertainty in the solution fraction $\lambda\in[0,1]$ using a posterior density $p^{(k)}(\lambda)$, starting from the uniform prior $p^{(0)}(\lambda)=1$. We replace the sums used to calculate the scores with integrals, evaluated numerically using a $2048$-point Gauss--Legendre quadrature rule, and select the parameters $(m^*,w^*)$ that maximize the score. For each possible outcome, the posterior weights are multiplied by the corresponding likelihood, $P(\lambda,m^*,w^*)$ for success or $1-P(\lambda,m^*,w^*)$ for failure, and then normalized. The quadrature nodes remain fixed, since the change in the solution fraction caused by removing one candidate vanishes as $N\to\infty$ at a fixed tree level. We construct the continuum tree up to level $20$ and use this common depth for comparisons with the finite tree.

For both trees, we determine $(m^*,w^*)$ by examining successive integer values of $m$ and optimizing the width over $0\leq w\leq0.999$ at each depth. Since the cost $m+r$ is fixed for a given $m$, the width optimization maximizes the posterior-predictive success probability. We use the variable $q=w^2$, with a GPU-evaluated coarse width search followed by bounded scalar refinement, and an additional search in $\log q$ when needed to resolve narrow maxima near $w=0$. The configured width and objective-value tolerances are both $10^{-10}$, and the endpoints remain explicit candidates. Starting from the classical instance $m=0$, we increase $m$ until the upper bound $1/(m+r)$ cannot exceed the best score already found by more than $10^{-10}$, with a numerical safety limit of $m=4096$. The selected parameters correspond to the largest computed score with exact score ties are resolved in favour of the smaller $m$ and then the smaller $w$.

For each distinct posterior state, we store the testing ratio $r$, tree level $k$, selected parameters $(m^*,w^*)$, predicted success probability, and corresponding score, together with the optimal Grover score evaluated on the same posterior. The parent-child relationships are also recorded in order to analyze the transition from classical to quantum searching instance selection. 

\begin{figure}[!h]
  \centering
  \includegraphics[width=\textwidth]{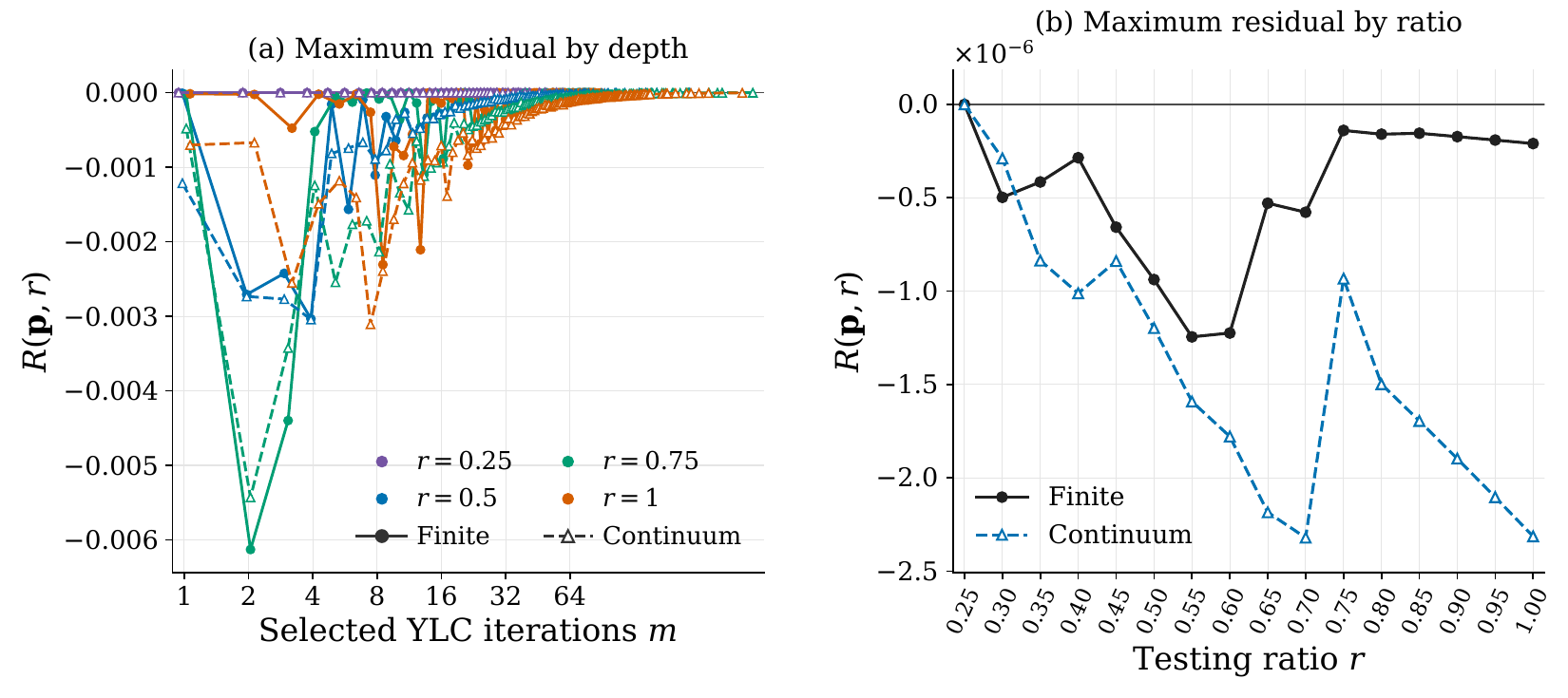}
  \caption{Residual of the maximum gap between the selected YLC and the optimal Grover score for the same posterior distribution obtained from the $N=16384$ tree and the continuum tree through level $20$: (a) as a function of the YLC depth $m$ for fixed testing ratio $r\in\{0.25, 0.5, 0.75, 1\}$; (b) as a function of the testing ratio $r$ over all selected depths.}
  \label{fig:posterior_specific_tree_residuals}
\end{figure}

We analyze the dependence of the selected parameter $w$ on the testing ratio $r$ in order to numerically check Theorem~\ref{th:phase_score_bound}. In Fig.~\ref{fig:finite_continuum_reachable_tree_widths}, we show the maximum selected width $w$ over all obtained posterior distributions at levels $0$--$20$ of the finite and continuum trees, as well as the percentage of these posteriors that selected $w>0$. The maximum selected width decreases as the testing ratio decreases, while the frequency of selecting a positive width follows this overall trend with some nonmonotonic variations. At the sampled ratios $r\leq0.30$, no posterior selected a positive width in either tree. At $r=0.35$, $1400$ of the $2097151$ finite-tree posterior records selected a positive width ($\approx0.0668\%$), with a maximum of $w\approx0.0211$, while $140$ records selected a positive width in the continuum tree ($\approx0.00668\%$), with a maximum of $w\approx0.0247$. These observations are consistent with the absence of a positive-width score advantage for $r\leq1/4$, as stated in Theorem~\ref{th:phase_score_bound}.

For testing the maximum gap between the selected YLC and the optimal Grover for the same posterior distribution, described in Theorem~\ref{th:phase_score_gap}, we calculate the residual $R(\mathbf p,r)$ for each retained state in the finite and continuum trees through level $20$. The residual is calculated as the difference between the numerical gap between the optimal YLC and Grover scores and the theoretical one predicted by Theorem~\ref{th:phase_score_gap}. In Fig.~\ref{fig:posterior_specific_tree_residuals}, we show the maximum residual over all posteriors obtained by the two trees, both as a function of the selected YLC depth $m$ at fixed testing ratios and as a function of the testing ratio $r$ over all selected depths. The residuals remain at numerical zero at $r=0.25$ and stay nonpositive across the sampled depths and ratios. Across all selected depths, the largest residual is $5.3\times10^{-15}$, and no state exceeds the posterior-specific bound within the numerical allowance.

These results confirm the theoretical results obtained in Theorems~\ref{th:phase_score_bound} and~\ref{th:phase_score_gap}, with no violation of the theoretical bounds being observed in the generated data.

\subsection{Paired complete-search benchmark}\label{subsec:paired_benchmark}

For characterizing and comparing the performance of the two greedy algorithms, we simulate complete searches on a database of size $N=4096$, starting from the uniform prior on $M\in\{0,\ldots,N\}$. We consider the testing ratios $r\in\{0.25,0.50,0.75,1\}$ and the true solution counts $M\in\{1,\ldots,32,64,256,1024,4096\}$. For each $(M,r)$, we use $1024$ paired trials, with the same labelled random streams for Grover and YLC, also reused across the testing ratios. The resulting data contain $294\,912$ trajectories. We measure the runtime by $T=Q_{\mathrm q}+rQ_{\mathrm c}$, so that each searching instance contributes its quantum depth $m$ and the verification cost $r$.

\begin{figure}[!h]
  \centering
  \includegraphics[width=\textwidth]{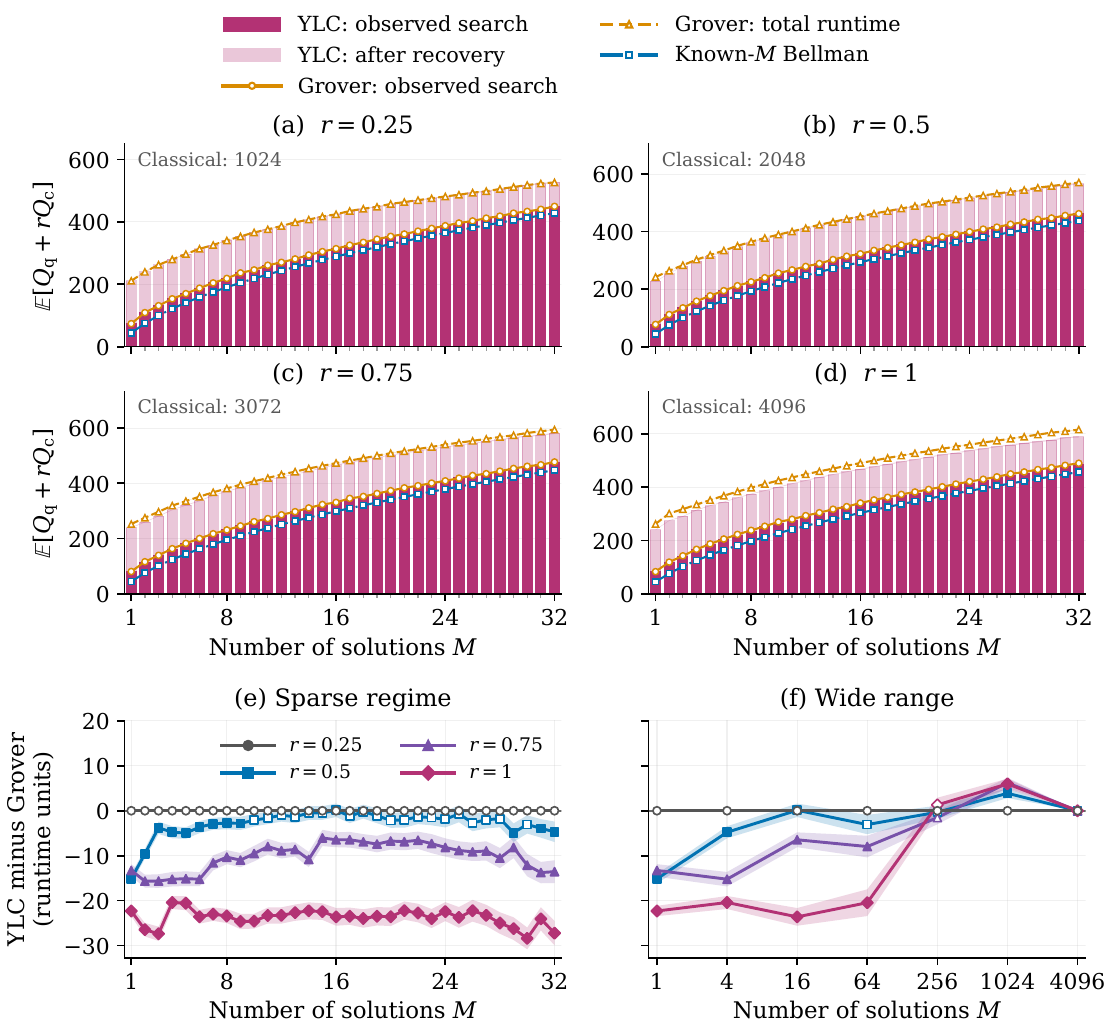}
  \caption{Complete-search performance at $N=4096,\varepsilon=0.10$: (a)-(d) YLC observed search expenditure (saturated bars) and expenditure after full recovery (pale bars), with Grover search expenditure (solid) and total runtime (dashed); (e)-(f) paired YLC-minus-Grover total-runtime differences over sparse and wide counts. The Bellman curve gives the known-$M$ discovery reference, and annotations give the classical cost $rN$. All costs use $Q_{\mathrm q}+rQ_{\mathrm c}$ and all $1024$ trials. Shaded bands give pointwise $95\%$ confidence intervals; filled markers in (e)-(f) indicate Holm-resolved differences.}
  \label{fig:unknown_m_complete_comparison}
\end{figure}

For each searching instance, we record the selected parameters $(m_j,w_j)$, the outcome, and the posterior expected number of remaining solutions $e_j$. The source searches were recorded to $\varepsilon=0.10$ and compares it with $0.25$, $0.50$, and $1$. For each tolerance, we retain the history through the first instance satisfying $e_j\leq\varepsilon$, including its full cost. We also verified the recorded histories and reconstructed resource totals against the archived benchmark. At $\varepsilon=0.10$, $2114$ trajectories stop before recovering every solution, representing less than $0.72\%$ of the recorded trajectories.

In Fig.~\ref{fig:unknown_m_complete_comparison}(a)-(d), we show the mean runtime and its expenditure before and after finding all solutions for $M=1,\ldots,32$, decomposing the total runtime of the algorithm into finding all solutions and confirming there are no solutions in the database. The saturated part of each YLC bar includes the cost through the instance finding the final solution, or the entire runtime if the search stops earlier. We call this component observed search expenditure. The pale part contains only expenditure after full recovery, and the two parts sum to the total runtime on every trial. The Grover lines show the observed search expenditure and total runtime. We also include the known-$M$ Bellman result as a discovery-cost reference, while the exhaustive classical cost $rN$ is given in each panel. All means use the same $1024$ trials per algorithm.

\begin{figure}[!h]
  \centering
  \includegraphics[width=\textwidth]{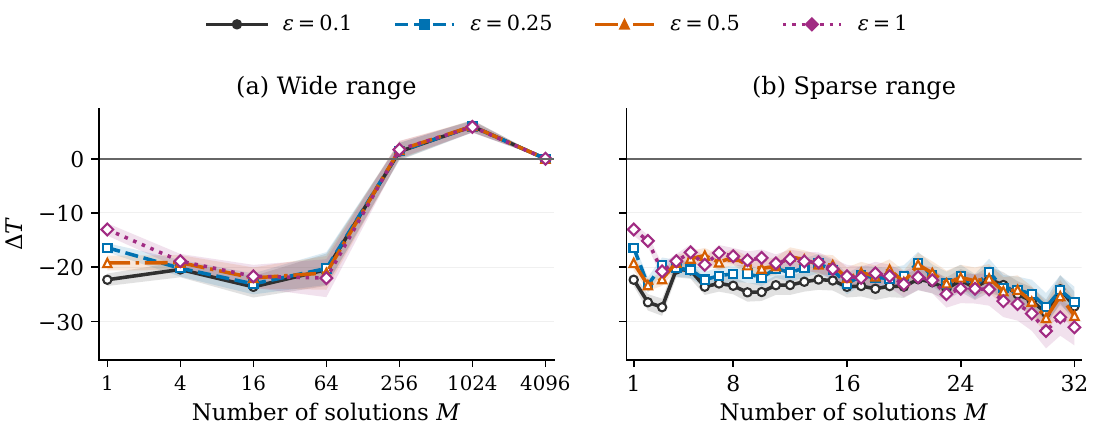}
  \caption{Paired runtime difference $\Delta T=T_{\mathrm{YLC}}-T_{\mathrm{Grover}}$ at $N=4096,r=1$ for $\varepsilon\in\{0.10,0.25,0.50,1\}$: (a) wide solution counts on a logarithmic axis; (b) every $M=1,\ldots,32$ on a linear axis. Negative values favor YLC. Shaded bands give pointwise $95\%$ paired confidence intervals, including all $1024$ trials.}
  \label{fig:stopping_sensitivity}
\end{figure}

Fig.~\ref{fig:unknown_m_complete_comparison}~(e)-(f) show the paired difference $\Delta T=T_{\mathrm{YLC}}-T_{\mathrm{Grover}}$, with negative values indicating a YLC advantage. For $r=1/4$, all $36\,864$ pairs have identical actions, outcomes, and posterior means, with $w=0$ throughout, reconfirming the result of Theorem~\ref{th:phase_score_bound}. For $r=0.50$, $0.75$, and $1$, YLC has a significantly lower mean runtime at $12$, $33$, and $33$ of the $36$ tested counts, respectively. Their runtime become approximately equal, with a difference of less than one Grover iteration time at $M=256$, and a small advantage for Grover in the range $256-4096$. These comparisons use Holm correction over the $144$ baseline comparisons. At $M=16,r=1$, for example, the mean runtimes are $490.57$ for Grover and $466.94$ for YLC.

\begin{figure}[!h]
  \centering
  \includegraphics[width=\textwidth]{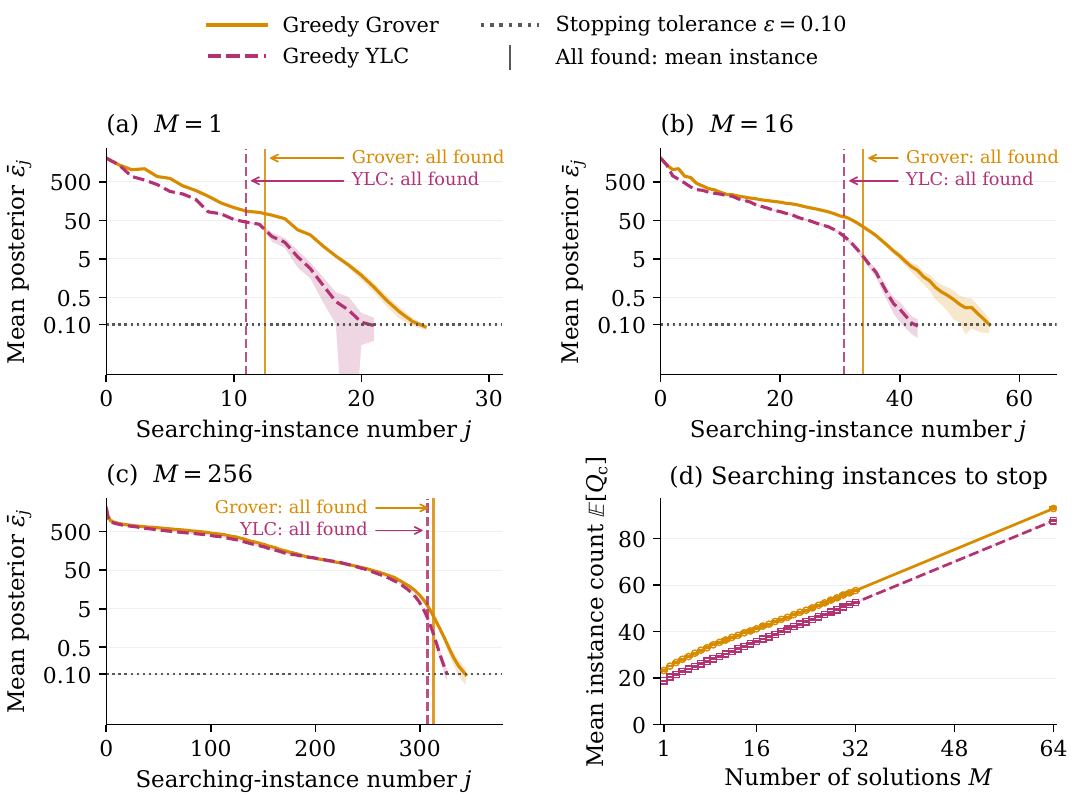}
  \caption{Posterior evolution at $N=4096,r=1$: (a)-(c) average remaining-solution means for $M=1,16,256$, with terminal values held after stopping and curves ending at their first mean crossing of $0.10$; (d) mean searching-instance count for $M=1,\ldots,32,64$ on linear axes. Shaded bands and capped error bars give pointwise $95\%$ confidence intervals.}
  \label{fig:posterior_instance_histories}
\end{figure}

In Fig.~\ref{fig:stopping_sensitivity}, we analyze how the stopping tolerance changes the runtime difference at $r=1$. Relaxing the tolerance decreases the mean runtime of both algorithms, but their relative performance need not change in the same direction. For $M=16$, YLC saves between $21.68$ and $23.62$ runtime units across the four tolerances. For $M=256$, the difference is smaller and favors Grover: at $\varepsilon=1$, it is $1.69$, with a pointwise $95\%$ interval of $[0.10,3.28]$, indicating a small advantage in this exploratory comparison. The shorter runtime must also be considered together with the probability of missing solutions. At $M=256,r=1$, increasing $\varepsilon$ from $0.10$ to $1$ reduces the mean runtime by $4.70\%$ for Grover and $4.67\%$ for YLC, while the numbers of incomplete searches increase from $45$ to $574$ and from $51$ to $575$, respectively, out of $1024$ trials.

In Fig.~\ref{fig:posterior_instance_histories}(a)-(c), we show the evolution of the average posterior expected number of remaining solutions, overe all $1024$ trials, as a function of the searching-instance number for $M=1,16,256$ at $r=1$, stopping when the stopping condition $\varepsilon=0.10$ is achieved. The vertical lines mark the average instance at which all solutions are found, calculated only on jointly successful pairs.

Panel~(d) shows the mean number of searching instances required to reach the stopping condition, for the measured counts $M=1,\ldots,32,64$. Each instance includes one verification, so this number is $Q_{\mathrm c}$. At $M=1,r=1$, Grover uses a mean of $23.23$ instances and YLC uses $18.88$, with mean runtimes of $262.10$ and $239.79$. The mean stopping instance is calculated from individual trials and need not coincide with the crossing of an averaged posterior curve. Instance counts also do not determine the runtime alone, since the selected quantum depths contribute to the cost.

Confidence intervals use Student-$t$ estimates, with trial pairing retained for runtime differences. These results describe complete-search performance at the tested $N$.

\subsection{Classical-to-quantum transition}\label{subsec:numerical_transition}

For checking the classical-to-quantum transition described in Section~\ref{subsection:classical_quantum_transition}, we analyze the first quantum searching instance selected by the algorithm. From both trees, we retain the states for which all preceding instances were classical. We analyze the discrete transition conditions for $r>1/4$ and the asymptotic expression for smaller ratios separately.

\begin{figure}[!h]
  \centering
  \includegraphics[width=\textwidth]{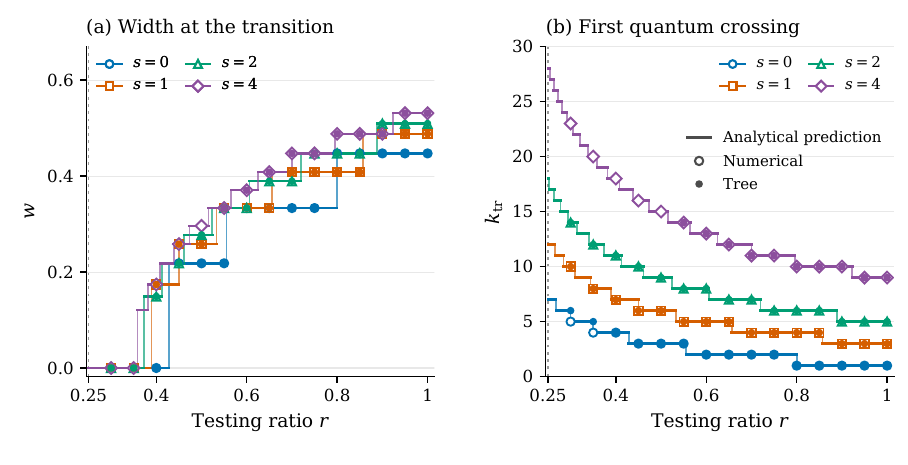}
  \caption{Classical-to-quantum transition for $s\in\{0,1,2,4\}$ and $r>1/4$: (a) selected width; (b) transition count. Lines show analytical predictions, open symbols show direct posterior calculations, and filled circles show continuum-tree selections, including delays at score ties. Colors identify $s$.}
  \label{fig:classical_quantum_discrete_transition}
\end{figure}

In Fig.~\ref{fig:classical_quantum_discrete_transition}, we compare the continuum-tree selections with direct calculations on the same family of Beta posteriors, which also allow us to examine transitions beyond level $14$. We evaluate the success probabilities by Gauss-Jacobi quadrature and optimize the widths independently of the theoretical predictions. Panel~(a) shows the selected width together with eq.~\eqref{eq:depth_one_optimal_width}, while panel~(b) shows the transition count together with eqs.~\eqref{eq:transition_positive_width_branch} and~\eqref{eq:transition_zero_width_branch}. All $558$ directly calculated transitions for $s=0,\ldots,4$ agree with the predictions, with $m=1$ giving the first quantum crossing. The maximum width differences are $5.50\times10^{-15}$ for the direct calculations and $1.26\times10^{-8}$ for the saved continuum-tree selections.

The transition formulas include equality between the quantum and classical scores. At an exact tie, the saved optimizer can retain a classical instance, explaining the one-instance delays observed at $s=0$, $r=0.30$ and $r=0.35$. The width prediction is therefore evaluated at the actual selected $k$. The staircase structure follows from the integer number of preceding classical instances.

\begin{figure}[tbp]
  \centering
  \includegraphics[width=\textwidth]{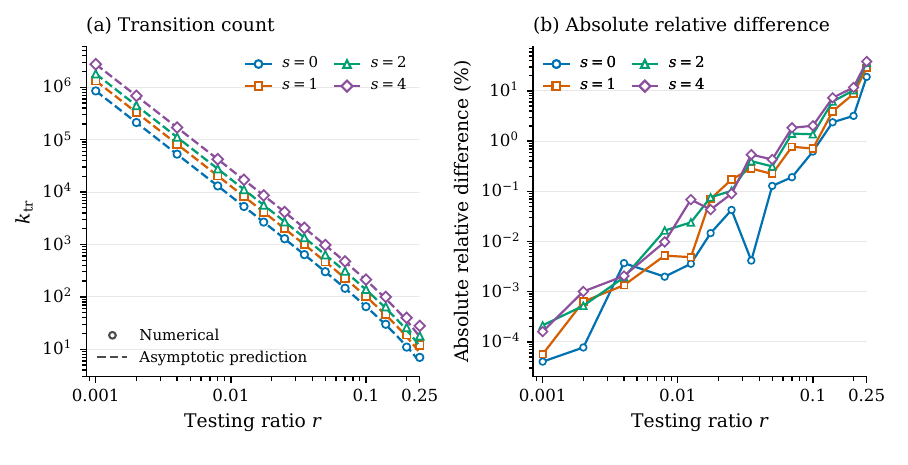}
  \caption{Asymptotic transition comparison for $s\in\{0,1,2,4\}$ and $0.001\leq r\leq0.25$: (a) numerical counts and the full two-term expression; (b) the absolute relative difference $|k_{\mathrm{num}}-k_{\mathrm{theory}}|/k_{\mathrm{num}}$, shown as a percentage. Colors identify $s$. Numerical counts are verified first crossings, including score equality.}
  \label{fig:classical_quantum_asymptotic_transition}
\end{figure}

For testing the asymptotic result of Theorem~\ref{th:classical_quantum_transition}, we calculate $56$ transitions for $s\in\{0,1,2,4\}$ and $14$ ratios over $0.001\leq r\leq0.25$. In this range, Theorem~\ref{th:phase_score_bound} reduces the optimization to Grover instances. We verify each first score crossing. The constants $C_s$ and $B_s$ are calculated independently from the limiting posterior integral.

In Fig.~\ref{fig:classical_quantum_asymptotic_transition}(a), we compare these numerical counts with the full expression $k_{\mathrm{asy}}=C_s^2/r^2-2B_s/r$. We observe that the relative difference decreases as the testing ratio decreases. At $r=0.10$, the absolute relative error is between $0.62\%$ and $2.02\%$ across the four values of $s$, while at $r=0.001$ it is between $0.0000403\%$ and $0.000213\%$. Panel~(b) shows the absolute relative difference $|k_{\mathrm{tr}}-k_{\mathrm{asy}}|/k_{\mathrm{tr}}$ on a logarithmic scale. The integer choices of depth and transition count produce fluctuations in this difference, while their scale relative to $k_{\mathrm{tr}}$ vanishes as $r$ decreases. This behavior is consistent with the $\mathcal O_s(1)$ remainder predicted for each fixed $s$.

\section{Conclusion}

\begingroup
\widowpenalty=10000

In this paper we analyzed the influence of the time ratio $r$ of classically testing an element compared to quantum amplitude amplification, for an adaptive Bayesian strategy for complete unstructured search with an unknown number of solutions. The proposed algorithm combines classical sampling, Grover search, and YLC fixed-point search, updating the posterior distribution over the number of remaining solutions after every instance. Each tested element is then removed from the active search space, and the next instance maximizes the posterior-predictive success probability per execution time cost. The search stops when the posterior expected number of undiscovered solutions is at most the prescribed tolerance.

For fixed $r$ and known number of solutions much lower than the number of elements in the database, the greedy policy is asymptotically optimal within the considered search model. The total cost of this policy is approximately $12.14\%$ lower than that of using an exact searching instance at every stage. Thus, requiring certainty at each instance does not minimize the expected cost of complete search in this regime. This result, where the number of solutions is known, also serves as a reference for the simulations with unknown number of solutions, where the algorithm must infer when to stop.

For every posterior distribution and regimes where $0<r\leq1/4$, we have proven that YLC fixed-point search cannot improve the best greedy score obtained from a hybrid approach combining classical sampling and unmodified Grover search. Thus, the optimization over the YLC width can therefore be omitted in this regime. For the $1/4<r\leq1$ regime, we derived an upper bound on the relative score improvement which decreases with the number of amplification iterations. This also represents an upper bound on the additional cost of a generalized iteration such that an advantage remains possible. These statements concern individual searching instance scores, not the total runtime of the resulting adaptive policies.

Starting from a uniform prior, we derived a threshold for optimal transition from classical to quantum search by approximating the probability distribution of the number of solutions. For $1/4<r\leq1$, we determine that the first transition from classical to quantum search instances is attained by an optimized instance with one amplification iteration and we determine the value of the YLC width selected at transition. For a fixed number of observed solutions and $r<1/4$, the number of preceding classical queries has an $r^{-2}$ leading term and an explicit $r^{-1}$ correction.

The findings of this work are fully supported by decision-tree calculations and paired Monte Carlo simulations. Allowing fixed-point searching instances reduces the mean scaled runtime in many of the tested sparse cases at larger verification ratios, but the improvement is not uniform over the solution counts considered. We found that increasing the stopping tolerance can lower the runtime while substantially increasing the fraction of runs that leave solutions undiscovered.

\endgroup

\section*{Author contributions}
Conceptualization: R.A.G. and P.G.P.; methodology: R.A.G. and P.G.P.; validation: R.A.G, and B.C.C.; writing— original draft preparation: R.A.G., B.C.C., and P.G.P.; writing—review and editing: R.A.G., B.C.C., and P.G.P.; supervision: P.G.P.; All authors have read and agreed to the published version of the manuscript.

\section*{Data availability}
The data supporting the findings of this study are available from the corresponding author upon reasonable request.

\section*{Acknowledgment}
This work has been partially supported by RExQTCS: Romanian Excellence in Quantum Technologies enhancing Cybersecurity, a grant of the Ministry of Education and Research from Romania, CCCDI - UEFISCDI, project number PN-IV-P6-6.1-CoEx-2024-0214, within PNCDI IV.


\bibliographystyle{quantum}
\bibliography{sn-bibliography}
\end{document}